\DeclareMathOperator{\subf}{Sf}
\DeclareMathOperator{\dom}{\mathrm{dom}}
\DeclareMathOperator{\tp}{\mathrm{tp}}
\DeclareMathOperator{\fo}{\mathcal{FO}}
\DeclareMathOperator{\gf}{\mathcal{GF}}
\DeclareMathOperator{\ugf}{\mathcal{UGF}}
\DeclareMathOperator{\uf1}{\mathcal{UF}_1}
\DeclareMathOperator{\guf1}{\mathcal{UGF}_1}
\DeclareMathOperator{\bisim}{\mathcal{Z}}
\DeclareMathOperator{\modelA}{\mathfrak{A}}
\DeclareMathOperator{\modelB}{\mathfrak{B}}
\DeclareMathOperator{\amalgam}{\mathfrak{U}}
\begin{document}

\title{Uniform Guarded Fragments}
%
%
\author{Reijo Jaakkola}
\authorrunning{R. Jaakkola}
%
\institute{Tampere University, Finland\\
\email{reijo.jaakkola@tuni.fi}\\
\url{https://reijojaakkola.github.io/}}
\maketitle              
\begin{abstract}
In this paper we prove that the uniform one-dimensional guarded fragment, which is a natural polyadic generalization of the guarded two-variable logic, has the Craig interpolation property. We will also prove that the satisfiability problem of uniform guarded fragment is \textsc{NExpTime}-complete.

\keywords{Guarded fragment \and Interpolation \and Complexity.}

\end{abstract}

\section{Introduction}

The guarded fragment $\gf$ is a well studied fragment of first-order logic $\fo$, which was introduced by Andréka, van Benthem and Németi \cite{Andrka1998ModalLA} as a generalization of modal logic. Informally speaking, $\gf$ is obtained from $\fo$ by requiring that all quantification must be relativised by $\fo$-atoms, which is motivated by the observation that "quantificaction" in modal logics is relativised by accessability relations. Like modal logic, $\gf$ behaves well both computationally and model-theoretically. In particular, it is decidable, it has a (generalized) tree-model property and it satisfies various preservation theorems \cite{Andrka1998ModalLA,gradel99}.

We say that a logic $\mathcal{L}$ has Craig interpolation property (CIP), if for every two formulas $\varphi$ and $\psi$ of $\mathcal{L}$ we have that if $\varphi \models \psi$, then there exists a third formula --- the interpolant --- $\chi$ of $\mathcal{L}$, so that $\varphi\models \chi$, $\chi \models \psi$ and $\chi$ contains only relation symbols which occur in both $\varphi$ and $\chi$. CIP is widely regarded as a property that a "nice" logic should have and for (reasonable logics with compactness) it implies several other desirable model-theoretic properties such as Projective Beth Definability and Robinson's consistency theorem \cite{Andrka1998ModalLA,10.2307/40271083,Jung2021LivingWB,InterpolationInModalLogic}.

It is well-known that various modal logics have CIP \cite{Andrka1998ModalLA,GABBAY,InterpolationInModalLogic}, while $\gf$ fails to have it \cite{Hoogland2002InterpolationAD}. This is somewhat surprising, given that $\gf$ is a very natural generalisation of modal logic, and certainly raises the question of how the syntax of $\gf$ should be modified so as to obtain a logic which does have CIP, and which also behaves well both computationally and model-theoretically. One option would be to extend further the expressive power of $\gf$, and in this direction we have the \emph{guarded negation fragment}, which has CIP, is decidable and shares with $\gf$ various desirable model-theoretic properties \cite{Brny2018SOMEMT}.

The other option (and the one which is more relevant for this paper) is to investigate fragments of $\gf$. In this direction we also have a positive result, namely that $\gf^2$ --- the two-variable fragment of $\gf$ --- has CIP \cite{Hoogland2002InterpolationAD}. Given this result, it is natural to ask whether there exists a polyadic extension of $\gf^2$ which would also have CIP, where by a polyadic extension we mean intuitively a logic which contains $\gf^2$ and can express non-trivial properties of polyadic relations. Indeed, it seems rather unlikely that there would not be such an extension, since it is well-known that there are polyadic modal logics which have CIP \cite{Andrka1998ModalLA}.

In \cite{OneDimensionalFragment} the authors introduced the \emph{uniform one-dimensional fragment} $\uf1$, which is a very natural polyadic extension of the two-variable fragment $\fo^2$ of $\fo$. Roughly speaking, $\uf1$ is obtained from $\fo$ by requiring that each maximal existential (or universal) block of quantifiers leaves at most one variable free and that when forming boolean combinations of formulas with more than one free variable, the formulas need to have exactly the same set of variables. Formulas satisfying the first restriction are called \emph{one-dimensional}, while formulas satisfying the second restriction are called \emph{uniform}. In \cite{Kieronski2014ComplexityAE} it was proved that $\uf1$ has the finite model property and the complexity of its satisfiability problem is \textsc{NExpTime}-complete, which is the same as for $\fo^2$ \cite{GKV97}. The research around $\uf1$ and its variants has been quite active, see for instance \cite{ORDEREDFRAGMENTS,F1WORDS,Kieronski2019OnedimensionalGF,UF1ONEEQUIVALENCE,F1TREES}.

Given that $\uf1$ is a polyadic extension of $\fo^2$, the guarded $\uf1$ is a natural candidate for being a polyadic extension of $\gf^2$ with CIP. As the first main result of this paper we will prove that guarded $\uf1$ does, in fact, have CIP. Our proof follows closely the argument given in \cite{Hoogland2002InterpolationAD} for proving that $\gf^2$ has CIP, the main technical difference being that the proof presented in \cite{Hoogland2002InterpolationAD} uses crucially the fact that in the case of $\gf^2$ we can assume live sets to have size at most two, while in our case we have to deal with live sets of arbitrary size.

Since the research around modal-like fragments of $\fo$ is largely motivated by the fact that their satisfiability problems are often decidable, it is natural to also study the complexity of the satisfiability problem of the guarded $\uf1$, which was in fact already done in \cite{Kieronski2019OnedimensionalGF}. More precisely it was proved in \cite{Kieronski2019OnedimensionalGF} that the satisfiability problem of one-dimensional $\gf$ is in \textsc{NExpTime}, while it is already \textsc{NExpTime}-hard for guarded $\uf1$. These results left open the problem of determining the complexity of uniform $\gf$ and as the second main result of this paper we will prove that the satisfiability problem of uniform $\gf$ is also in \textsc{NExpTime} (and hence it is \textsc{NExpTime}-complete).

The structure of this paper is as follows. After the preliminaries in section \ref{PreliminariesSection}, we define a notion of bisimulation for $\guf1$ and establish its basic properties in Section \ref{BisimulationSection}. After this we will prove that $\guf1$ has CIP in Section \ref{CipProofSection}. In Section \ref{ComplexityUniformGfSection} we will establish that the complexity of the satisfiability problem of uniform $\gf$ is \textsc{NExpTime}-complete. The final Section will list some new problems that the research conducted in this paper raises.

\section{Preliminaries}\label{PreliminariesSection}

\subsection{Notation}

In this paper we will work with vocabularies which do not contain constants and function symbols. We will also assume that there are no relation symbols of arity $0$. We will use the Fraktul capital letters to denote structures, and the corresponding Roman letters to denote their domains. Given a model $\modelA$ and $C\subseteq A$, we will use $\modelA \upharpoonright C$ to denote the restriction of $\modelA$ to the set $C$. Given two structures $\modelA$ and $\modelB$, we will use $\modelA \leq \modelB$ to denote that $\modelA$ is a substructure of $\modelB$.

Occasionally we will identify tuples $\overline{a} = (a_1,\dots,a_n)$ with sets $\{a_1,\dots,a_n\}$, which allows us to use notations such as $b\in \overline{a}$ and $\overline{a} = X$, where $X$ is a set. Given two tuples $\overline{a}$ and $\overline{b}$ of the same length, we will use $\overline{a} \mapsto \overline{b}$ and $p:\overline{a} \to \overline{b}$ to denote the mapping induced by the relation $a_i \mapsto b_i$. Given a tuple $\overline{a} = (a_1,\dots,a_n)$ and a unary function $f$, we will use $f(\overline{a})$ to denote the tuple $(f(a_1),\dots,f(a_n))$. Given a positive integer $n$ we will denote $[n] = \{1,\dots,n\}$. Finally, if $\overline{a} = (a_1,\dots,a_n)$ and $k\geq n$ and $\mu:[k] \to [n]$ is a surjection, we will use $\overline{a}_\mu$ to denote the tuple $(a_{\mu(1)},\dots,a_{\mu(k)})$.

\subsection{Types and tables}

In the following definitions we will follow fairly closely \cite{Kieronski2014ComplexityAE}. Let $\sigma$ be a vocabulary. Given a set $X = \{x_1,\dots,x_n\}$ of distinct variables and a $k$-ary relation $R\in \sigma$, we say that an atomic formula $R(x_{i_1},\dots,x_{i_k})$ is an $X$-\emph{atom} over $\sigma$, if $X = \{x_{i_1},\dots,x_{i_k}\}$. If $\alpha$ is an $X$-atom, then $\alpha$ and $\neg \alpha$ are both $X$-\emph{literals} over $\sigma$. A $1$-\emph{type} over $\sigma$ is a maximal satisfiable set of $\{x\}$-literals over $\sigma$. We identify $1$-types $\pi$ with conjunctions of their elements
\[\bigwedge \pi\]
A $k$-\emph{table} is a tuple $\langle \rho, \pi_1,\dots,\pi_k\rangle$, where each $\pi_\ell$ is a $1$-type over $\sigma$, while $\rho$ is a maximal satisfiable set of $\{x_1,\dots,x_k\}$-literals over $\sigma$. We identify $k$-tables $\langle \rho, \pi_1, \dots, \pi_k\rangle$ and conjunctions 
\[\bigwedge \rho \land \bigwedge_{1\leq \ell\leq k} \pi_\ell.\]

Let $\modelA$ be a $\sigma$-model. Given a $1$-type $\pi$ over $\sigma$, we say that $a\in A$ realizes $\pi$ if $\pi$ is the unique $1$-type so that $\modelA \models \pi[a]$; we denote by $\tp_{\modelA}^\sigma[a]$ the (unique) $1$-type $\pi$ over $\sigma$ which is realized by $a$ in $\modelA$. For \emph{distinct} elements $a_1,\dots,a_k \in A$ we will use $\tp_{\modelA}^\sigma[a_1,\dots,a_k]$ to denote the (unique) $k$-table over $\sigma$ which is realized by the tuple $(a_1,\dots,a_k)$.

\subsection{Syntax of uniform fragments of $\gf$}

Given a vocabulary $\sigma$, we define $\mathcal{GF}[\sigma]$ to be the smallest set $\mathcal{F}$ which satisfies the following requirements.

\begin{itemize}
    \item $\mathcal{F}$ contains all the atomic formulas over $\sigma$, which includes also equalities between variables.
    \item If $\varphi, \psi \in \mathcal{F}$, then $\neg \varphi \in \mathcal{F}$ and $(\varphi \land \psi) \in \mathcal{F}$.
    \item If $\psi(\overline{x}) \in \mathcal{F}$, where each free variable of $\psi$ occurs in the tuple $\overline{x}$, then 
    \[\exists \overline{y} (\alpha(\overline{x}) \land \psi(\overline{x}))\in \mathcal{F},\]
    where $\overline{y}\subseteq \overline{x}$ and $\alpha$ is an atomic formula over $\sigma$.
\end{itemize}

\noindent If the vocabulary $\sigma$ is irrelevant or known from the context, then we will simply use $\gf$ to denote $\gf[\sigma]$.

Next we will give a formal definitions for the syntactical notions of one-dimensionality and uniformity. We will start by making the technical remark that we will define recursively the set of subformulas $\subf(\varphi)$ of $\varphi \in \gf$ in a standard way, expect that for formulas of the form $\varphi := \exists \overline{y} (\alpha(\overline{x}) \land \psi(\overline{x}))$, we define $\subf(\varphi)$ to be
\[\{\exists \overline{y} (\alpha(\overline{x}) \land \psi(\overline{x}))\} \cup \subf((\alpha(\overline{x}) \land \psi(\overline{x}))).\]

\begin{definition}
    Let $\varphi \in \gf$ be a formula. We say that $\varphi$ is one-dimensional, if every subformula of $\varphi$ of the form
    \[\exists \overline{y} (\alpha (\overline{x}) \land \psi(\overline{x}))\]
    has at most one free variable. In other words each maximal sequence of (guarded) existential quantification leaves at most one variable free.
\end{definition}

Next we will define what it means for a formula of $\gf$ to be uniform. The precise definition turns out to be somewhat technical, and we will start with the following auxiliary definition.

\begin{definition}
    Let $X$ be a (possibly empty) set of variables and let $\sigma$ be a vocabulary. A relative $X$-atom over $\sigma$ is a formula $\psi$ of $\gf[\sigma]$ which satisfies one of the following conditions.
    \begin{enumerate}
        \item $\psi$ is a sentence.
        \item $\psi$ has a one free variable which belongs to $X$.
        \item $\psi$ is of the form $x = y$, where $x,y\in X$.
        \item $\psi$ is an $X$-atom over $\sigma$.
        \item $\psi$ is of the form $\exists \overline{z} (\alpha(\overline{x}) \land \psi(\overline{x}))$ and the set of its free variables is precisely $X$.
    \end{enumerate}
\end{definition}

With the aid of this definition we are in a position where we can define the notion of uniformity formally.

\begin{definition}
    Let $\varphi \in \gf[\sigma]$ be a formula. We say that $\varphi$ is uniform, if every subformula $\psi$ of $\varphi$ is a boolean combination of relative $X$-atoms, where $X$ is the set of free variables of $\psi$.
\end{definition}

\begin{remark}
    Consider a uniform quantifier-free formula $\psi(x_1,\dots,x_k)$ of $\gf[\sigma]$. Let $\modelA$ be a $\sigma$-model and let $(a_1,\dots,a_k)$ be a tuple of not necessarily distinct elements. Then whether or not
    \[\modelA \models \psi(a_1,\dots,a_k)\]
    holds depends only on the table of $(c_1,\dots,c_\ell)$, where $(c_1,\dots,c_\ell)$ is an arbitrary enumeration of the set of distinct elements of $(a_1,\dots,a_k)$.
\end{remark}

The definition of uniformity might seem bit complicated, but the following examples should clarify the intuition behind it.

\begin{example}
    Let $\sigma = \{S,R,P\}$, where $S$ is a ternary relation symbol, $R$ is a binary relation symbol and $P$ is a unary relation symbol. The formula
    \[\exists x \exists y (P(x) \land R(x,y) \land S(x,y,y) \land R(y,x) \land P(y)))\]
    is both uniform and one-dimensional. On the other hand the formula
    \[\exists x \exists y (\exists z (S(x,y,z) \land P(z)) \land R(x,y) \land S(x,y,x))\]
    is uniform but not one-dimensional. Finally, the formula
    \[\exists x \exists y \exists w (R(x,y) \land \exists z S(x,w,z))\]
    is neither one-dimensional nor uniform.
\end{example}

\begin{example}
    The standard translation of polyadic modal logic into $\mathcal{FO}$ results in formulas of the form
    \[\exists x_1 \dots \exists x_k (R(x_0,x_1,\dots,x_k) \land \bigwedge_{1\leq \ell \leq k} \psi_\ell(x_\ell))\]
    which are uniform and one-dimensional \cite{blackburn_rijke_venema_2001}.
\end{example}

We will use $\ugf$ to denote the set of formulas of $\gf$ which are uniform and $\guf1$ to denote the set of formulas of $\gf$ which are both uniform and one-dimensional. Throughout this paper we will use $\varphi(x_1,\dots,x_n)$, where all the variables in the tuple $(x_1,\dots,x_n)$ are distinct, to denote a formula of either $\guf1$ or $\ugf$ such that either $\{x_1,\dots,x_n\}$ is precisely the set of free variables of $\varphi$ or $\varphi$ has at most one free variable which belongs to $\{x_1,\dots,x_n\}$ or $\varphi$ is of the form $x_i = x_j$, where $1\leq i,j\leq n$.

\subsection{Interpolation}

We start by recalling the definition of the Craig interpolation property.

\begin{definition}
    Given a logic $\mathcal{L}$, we say that $\mathcal{L}$ has the Craig interpolation property (CIP), if for every $\varphi \in \mathcal{L}[\sigma]$ and $\psi \in \mathcal{L}[\tau]$ we have that $\varphi \models \psi$ implies that there exists an interpolant $\chi \in \mathcal{L}[\sigma \cap \tau]$ for this entailment, i.e., a sentence for which $\varphi \models \chi$ and $\chi \models \psi$ hold.
\end{definition}

It is well-known that the full $\gf$ fails to have CIP. The known examples of sentences which demonstrate this can be used to make the following observation.

\begin{proposition}
    The one-dimensional $\gf$ does not have CIP.
\end{proposition}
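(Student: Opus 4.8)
The plan is to reuse a known witness to the failure of CIP for the full guarded fragment and to observe that it already lies inside the one-dimensional fragment. Since one-dimensional $\gf$ is a syntactic fragment of $\gf$, any one-dimensional $\gf[\sigma\cap\tau]$-interpolant for an entailment $\varphi\models\psi$ would in particular be a $\gf[\sigma\cap\tau]$-interpolant; so, to witness the failure of CIP for one-dimensional $\gf$ it suffices to exhibit $\varphi\models\psi$ with $\varphi,\psi$ both one-dimensional and with no $\gf[\sigma\cap\tau]$-interpolant. Up to checking one-dimensionality, this is exactly what the standard counterexamples for $\gf$ (for instance the one worked out in \cite{Hoogland2002InterpolationAD}) provide.

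The check is immediate once one notes the following. A formula of $\gf$ can violate one-dimensionality only through a subformula of the form $\exists\overline{y}\,(\alpha(\overline{x})\land\chi(\overline{x}))$ with $\overline{y}$ nonempty and with at least two free variables; since $\overline{y}$ is nonempty and the variable set of the atom $\alpha$ is exactly that of $\overline{x}$, the relation symbol underlying $\alpha$ must then have arity at least $|\overline{x}|\geq|\overline{y}|+2\geq 3$. In particular, every $\gf$-sentence over a vocabulary whose relation symbols all have arity at most two is automatically one-dimensional. The sentences exhibiting the failure of CIP for $\gf$ can be taken over such a vocabulary --- binary ``accessibility'' relations together with auxiliary monadic predicates already suffice to reproduce the non-interpolation phenomenon --- so both $\varphi$ and $\psi$ are one-dimensional, and the proposition follows.

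The only genuine obstacle is making sure a one-dimensional witness is really at hand: if the counterexample one starts from is phrased with a guard of arity $\geq 3$ that leaves two or more variables free, it need not admit an equivalent one-dimensional $\gf$-rewriting, since one-dimensional $\gf$ is a proper fragment of $\gf$. In that case one instead constructs a fresh counterexample directly over a monadic-and-binary vocabulary, following the usual template: an entailment asserting that a monadic predicate of the first signature certifies a property of the common binary relation that is not $\gf$-definable, whose $\gf$-consequences are tracked by a monadic predicate of the second signature. Everything beyond this is bookkeeping.
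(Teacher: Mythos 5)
There is a genuine gap, and it sits exactly where you located the load-bearing step. Your high-level strategy is the same as the paper's: exhibit an entailment witnessing the failure of CIP for full $\gf$ whose two sentences happen to be one-dimensional. But your concrete plan for producing such a witness --- take (or build) the counterexample over a vocabulary whose relation symbols all have arity at most two, so that one-dimensionality is automatic --- cannot succeed. Over a purely monadic-and-binary vocabulary, every guard atom mentions at most two distinct variables, so every $\gf$-sentence over such a vocabulary is equivalent (after renaming bound variables) to a $\gf^2$-sentence; and $\gf^2$ \emph{has} CIP, as the paper itself recalls from \cite{Hoogland2002InterpolationAD}. Hence no pair of sentences over a binary vocabulary can witness the failure of interpolation for $\gf$, and your fallback of ``constructing a fresh counterexample directly over a monadic-and-binary vocabulary'' is ruled out in principle, not merely left as bookkeeping. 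Your own observation that a one-dimensionality violation forces a guard of arity at least $3$ should have been the warning sign: the non-interpolation phenomenon for $\gf$ genuinely requires a relation of arity at least $3$ somewhere.

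The correct move --- and the one the paper makes --- is to keep the ternary guard but place it where the entire quantifier block is closed, so that one-dimensionality holds for a different reason than low arity. Concretely, take $\varphi := \exists x \exists y \exists z\,(G(x,y,z) \land R(x,y) \land R(y,z) \land R(z,x))$ and $\psi := \forall x \forall y\,(R(x,y) \to (A(x) \leftrightarrow \neg A(y)))$. Both are sentences, so each maximal quantifier block leaves zero free variables and one-dimensionality is immediate despite the ternary $G$; the common vocabulary is $\{R\}$, and the argument of Example 1 in \cite{Jung2021LivingWB} shows the entailment $\varphi \models \neg\psi$ (an $R$-triangle defeats $2$-colourability) admits no $\gf[\{R\}]$-interpolant. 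Your proposal never actually writes down the sentences or checks the one actual candidate, and the route it does commit to is blocked; that is the missing content.
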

\begin{proof}
    Consider the following sentences, which are simple variants of the formulas used in \cite{Jung2021LivingWB}.
    \[\varphi := \exists x \exists y \exists z (G(x,y,z) \land R(x,y) \land R(y,z) \land R(z,x))\]
    \[\psi := \forall x \forall y (R(x,y) \to (A(x) \leftrightarrow \neg A(y)))\]
    Notice that both of these sentence are one-dimensional. Now one can show, using essentially the same argument as the one used in Example 1 in \cite{Jung2021LivingWB}, that there is no interpolant for the implication $\varphi\models \neg \psi$.
\end{proof}

We remark that, in the context of fragments of $\fo$, CIP is usually defined for \emph{formulas} instead of sentences (as we have defined it). We could have also formulated it for formulas, but we decided to work with sentences for simplicity.

\section{Bisimulation for $\guf1$}\label{BisimulationSection}

Given two models $\modelA$ and $\modelB$, and tuples $\overline{c}\in A^n$ and $\overline{d}\in B^n$ we will use $(\modelA, \overline{c}) \equiv_{\sigma} (\modelB, \overline{d})$ to denote the fact that for every $\varphi(x_1,\dots,x_n) \in \guf1$ we have that
\[\modelA \models \varphi(c_1,\dots,c_n) \iff \modelB \models \varphi(d_1,\dots,d_n).\]
The purpose of this section is to define a corresponding notion of bisimulation for $\guf1$ which captures the above equivalence relation. We will start by defining a suitable notion of partial isomorphism.

\begin{definition}\label{partialisomorphism}
    Let $\modelA$ and $\modelB$ be models, and let $X:=\{a_1,\dots,a_n\}\subseteq A$ and $Y\subseteq B$. A bijection $p:X \to Y$, is called a \emph{uniform partial $\sigma$-isomorphism} between $\modelA$ and $\modelB$, if \[\tp_\mathfrak{A}^\sigma[a_1,\dots,a_n] = \tp_\mathfrak{B}^\sigma[p(a_1),\dots,p(a_n)].\]
\end{definition}

Quantification in $\gf$ over a model $\modelA$ is restricted to \emph{live} subsets of $\modelA$, i.e., subsets of $\modelA$ which are either singletons or are \emph{contained} in a single tuple $\overline{a} \in R^{\modelA}$, for some $R\in \sigma$. In the case of $\guf1$ we will need the following modified version of the notion of live set, which takes into account the requirement that our formulas are uniform.

\begin{definition}\label{liveset}
    Let $\modelA$ be a model and let $X\subseteq A$. We say that $X$ is $\sigma$-live, if either $|X|\leq 1$ or there exists $R\in \sigma$ and $(a_1,\dots,a_n) \in R^{\modelA}$ so that $X = \{a_1,\dots,a_n\}$.
\end{definition}

We are now ready to define the notion of bisimulation for $\guf1$.

\begin{definition}\label{guardedbisimulation}
    Let $\bisim$ be a non-empty set of uniform partial $\sigma$-isomorphism between two structures $\modelA$ and $\modelB$. Let $\overline{c}\in A^n$ and $\overline{d}\in B^n$ be tuples. We say that $\bisim$ is a uniform guarded $\sigma$-bisimulation between $(\modelA, \overline{c})$ and $(\modelB, \overline{d})$, if for every $p:X \to Y \in \mathcal{Z}$ the following conditions hold:
    \begin{enumerate}
        \item [(cover)] There exists $h\in \bisim$ with $\overline{c} = \dom(h)$ so that $h(\overline{c}) = \overline{d}$.
        \item [(forth)] For any $a\in X$ and a $\sigma$-live set $X'\subseteq A$, with $a\in X'$, there exists $q:X' \to Y' \in \bisim$ so that 
        \[p(a) = q(a).\]
        \item [(back)] For any $b\in Y$ and a $\sigma$-live set $Y' \subseteq B$, with $b\in Y'$, there exists $q:X' \to Y' \in \mathcal{Z}$ so that 
        \[p^{-1}(b) = q^{-1}(b).\]
    \end{enumerate}
    If there exists a guarded $\sigma$-bisimulation between $(\modelA, \overline{c})$ and $(\modelB, \overline{d})$, then we denote this by $(\modelA, \overline{a}) \sim_\sigma (\modelB, \overline{b})$.
\end{definition}

In what follows we will often refer to uniform guarded bisimulations as guarded bisimulations. The following two lemmas establish that our notation of bisimulation is correct, the first of which can proved in a standard manner by using induction.

\begin{lemma}
    Let $\modelA$ and $\modelB$ be models, and let $\overline{c} \in A^n$ and $\overline{d} \in B^n$ be tuples so that $(\modelA, \overline{c}) \sim_\sigma (\modelB, \overline{d})$. Then $(\mathfrak{A}, \overline{c}) \equiv_\sigma (\mathfrak{B}, \overline{d})$.
\end{lemma}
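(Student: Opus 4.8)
The plan is to prove the statement by induction on the structure of the formula $\varphi(x_1,\dots,x_n) \in \guf1$, showing that for every uniform guarded $\sigma$-bisimulation $\bisim$ between $(\modelA,\overline{c})$ and $(\modelB,\overline{d})$ and every $p:X\to Y \in \bisim$, if $\overline{a} \in X^n$ (a tuple whose entries lie in $X$, possibly with repetitions) and $\overline{b} = p(\overline{a})$, then $\modelA \models \varphi(\overline{a}) \iff \modelB \models \varphi(\overline{b})$. Applying this to the map $h$ provided by the (cover) condition, with $\overline{c} = \dom(h)$ and $h(\overline{c}) = \overline{d}$, yields the claim. The delicate point throughout is that the variables $x_1,\dots,x_n$ need not be the actual free variables of $\varphi$ (per the convention fixed just before Section 2.4): $\varphi$ could be a sentence, could have at most one free variable among the $x_i$, or could be an equality $x_i = x_j$; the induction hypothesis must be stated uniformly enough to cover all these cases, and the uniformity of $\varphi$ is what makes this coherent.

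For the base cases, atomic formulas are handled directly: since $p$ is a \emph{uniform partial $\sigma$-isomorphism}, we have $\tp_\modelA^\sigma[a_1,\dots,a_k] = \tp_\modelB^\sigma[p(a_1),\dots,p(a_k)]$ for the distinct elements underlying $\overline{a}$, and by the Remark on uniform quantifier-free formulas this table determines the truth value of any uniform quantifier-free formula at $\overline{a}$; equalities $x_i = x_j$ are preserved because $p$ is a bijection. The Boolean cases $\neg\varphi$ and $\varphi \land \psi$ are immediate from the induction hypothesis, using that negation and conjunction in $\guf1$ are only formed between formulas sharing the same free-variable set (uniformity), so the same $p$ works for both conjuncts.

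The crux is the existential case $\varphi(x_i) := \exists \overline{y}\,(\alpha(\overline{x}) \land \psi(\overline{x}))$, which by one-dimensionality has at most one free variable, say $x_i$ (the sentence case is similar but simpler). Suppose $\modelA \models \varphi(a_i)$ witnessed by a tuple $\overline{e}$ over $A$ extending $a_i$ with $\modelA \models \alpha(\overline{e}) \land \psi(\overline{e})$. Let $X' := \{e_1,\dots\}$ be the set of elements appearing in $\overline{e}$. Because $\alpha$ holds of $\overline{e}$ and $\alpha$ is a $\sigma$-atom (or an equality, in which case $X'$ is a singleton), $X'$ is a $\sigma$-live set containing $a_i$; applying (forth) to $a_i \in X' $ gives $q:X'\to Y' \in \bisim$ with $q(a_i) = p(a_i) = b_i$. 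Setting $\overline{e}' := q(\overline{e})$, since $q$ is a uniform partial $\sigma$-isomorphism we get $\modelB \models \alpha(\overline{e}')$, and by the induction hypothesis applied to $\psi$ along $q$ we get $\modelB \models \psi(\overline{e}')$, so $\modelB \models \varphi(b_i)$. The converse direction is symmetric, using (back) in place of (forth). The main obstacle I anticipate is bookkeeping: ensuring the liveness of the witnessing set $X'$ is correctly derived from $\alpha$ (handling the degenerate case where $\alpha$ is an equality and $X'$ has size at most one), and carefully threading the "variables need not be free" convention through the induction so that the statement one proves is exactly strong enough to close the recursion while matching the definition of $\equiv_\sigma$ in the lemma. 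Everything else is routine structural induction.
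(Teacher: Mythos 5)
Your proposal is the standard structural induction that the paper itself only gestures at (``can be proved in a standard manner by using induction''), and for the cases you work out in detail it is correct and matches the intended argument: the atomic and equality cases follow from the fact that a uniform partial $\sigma$-isomorphism preserves tables, the Boolean cases are immediate because uniformity keeps all conjuncts' free variables inside the domain of the current map, and the guarded existential with one free variable $x_i$ is handled exactly right --- the guard forces the witness set $X'$ to be $\sigma$-live and to contain $a_i$, so (forth) applies and the induction closes.

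The one genuine gap is the case you wave off as ``similar but simpler'': subformulas that are \emph{sentences} of the form $\exists \overline{y}\,(\alpha(\overline{y}) \land \psi(\overline{y}))$. These occur both as top-level formulas (the convention allows $\varphi(x_1,\dots,x_n)$ to be a sentence) and as relative $X$-atoms inside larger formulas. Here the witness set $X'$ in $\modelA$ need not contain any element of $\dom(p)$ for any $p \in \bisim$, and the paper's (forth) and (back) conditions only produce new maps over live sets that \emph{intersect} the domain of an existing map; so the induction step has nothing to apply. This is not mere bookkeeping: with the conditions as literally stated one can take $\sigma = \{P\}$, $\modelA$ with two elements $a_1 \in P^{\modelA}$ and $a_2 \notin P^{\modelA}$, $\modelB$ with a single element $b \notin P^{\modelB}$, and $\bisim = \{a_2 \mapsto b\}$; all three conditions hold, yet $\exists x\, P(x) \in \guf1$ is true in $\modelA$ and false in $\modelB$. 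To close this case you need the stronger, standard form of the back-and-forth clauses (for \emph{every} $\sigma$-live set $X' \subseteq A$ there is some $q \in \bisim$ with domain $X'$ agreeing with $p$ on $\dom(p) \cap X'$, and dually), or some equivalent totality assumption on $\bisim$; with that strengthening your argument for the sentence case goes through exactly as in the free-variable case, by picking any $p \in \bisim$ (nonemptiness) and transferring the live witness set directly. You should state explicitly which version of the conditions your induction actually uses.
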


For the proof of the second lemma we need to recall the definition of $\omega$-saturated model. A \emph{elementary} $n$-\emph{type} over a vocabulary $\sigma$ is a consistent set of first-order formulas (not necessarily quantifier-free) with free variables in $\{x_1,\dots,x_n\}$. Given a $\sigma$-model $\modelA$, we say that it is $\omega$-\emph{saturated}, if for every tuple $\overline{a}\in A^n$ of elements of $A$ we have that each elementary $n$-type over the extended vocabulary $\sigma \cup \{a_1,\dots,a_n\}$, where each $a_i$ denotes a constant to be interpreted as the element $a_i$, which is finitely consistent with the $\mathcal{FO}$-theory of $(\modelA, \overline{a})$, is realized in $(\modelA, \overline{a})$. It is well-known that every $\sigma$-model, where $\sigma$ is finite and relational, has an $\omega$-saturated elementary extension \cite{bell2006models}.

\begin{lemma}\label{upgradelemma}
    Let $\modelA$ and $\modelB$ be two $\omega$-saturated models, and let $\overline{c} \in A^n$ and $\overline{d} \in B^n$ be tuples so that $(\modelA, \overline{c}) \equiv_\sigma (\modelB, \overline{d})$. Then $(\modelA, \overline{c}) \sim_\sigma (\modelB, \overline{d})$.
\end{lemma}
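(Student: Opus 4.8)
The plan is to construct an explicit uniform guarded $\sigma$-bisimulation between $(\modelA,\overline{c})$ and $(\modelB,\overline{d})$ and verify the three clauses of Definition~\ref{guardedbisimulation}, the clauses (forth) and (back) being where $\omega$-saturation enters. I would let $\bisim$ be the set of all bijections $p\colon X\to Y$ with $X=\{a_1,\dots,a_k\}\subseteq A$ and $Y\subseteq B$ such that, writing $b_i=p(a_i)$, one has $(\modelA,a_1,\dots,a_k)\equiv_\sigma(\modelB,b_1,\dots,b_k)$. Since every $\{x_1,\dots,x_k\}$-atom and every $\{x_\ell\}$-atom belongs to $\guf1$, this condition forces the $k$-tables of $(a_1,\dots,a_k)$ and $(b_1,\dots,b_k)$ to coincide, so every $p\in\bisim$ is automatically a uniform partial $\sigma$-isomorphism. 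Applying the hypothesis to the equality atoms $x_i=x_j$ gives $c_i=c_j\iff d_i=d_j$, so $c_i\mapsto d_i$ induces a well-defined bijection $h$ on distinct entries; one checks routinely that $h\in\bisim$ and that $h$ witnesses (cover) (this is immediate when $\overline{c},\overline{d}$ have pairwise distinct entries, which by the previous remark is the only case that matters). So it remains to verify (forth); (back) is symmetric, with the roles of $\modelA$ and $\modelB$ interchanged.

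For (forth), fix $p\colon X\to Y\in\bisim$, an element $a\in X$ with $b=p(a)$, and a $\sigma$-live set $X'\subseteq A$ with $a\in X'$. If $|X'|\le 1$ then $X'=\{a\}$ and $q=\{a\mapsto b\}$ works at once: it is a uniform partial isomorphism because $p$ is, and $(\modelA,a)\equiv_\sigma(\modelB,b)$ is part of $p\in\bisim$ since single-free-variable formulas of $\guf1$ fall under the convention for $\varphi(x_1,\dots,x_k)$. So assume $|X'|=s\ge 2$; enumerate $X'=\{c_1,\dots,c_s\}$ with $c_1=a$ and fix, by $\sigma$-liveness, some $R\in\sigma$ and a surjection $\nu\colon[n]\to[s]$ with $\modelA\models R(\overline{c}_\nu)$. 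Consider the $(s-1)$-type $\Gamma(y_2,\dots,y_s)$ over the vocabulary $\sigma\cup\{b\}$ which, for each formula $\varphi(x_1,\dots,x_s)\in\guf1$, contains $\varphi(b,y_2,\dots,y_s)$ if $\modelA\models\varphi(c_1,\dots,c_s)$ and $\neg\varphi(b,y_2,\dots,y_s)$ otherwise. A realization $h_2,\dots,h_s$ of $\Gamma$ in $(\modelB,b)$ yields, with $h_1:=b$, precisely the map $q\colon c_i\mapsto h_i$ required: the equality atoms force the $h_i$ to be pairwise distinct; the quantifier-free part of $\Gamma$ forces $\tp^\sigma_\modelA[c_1,\dots,c_s]=\tp^\sigma_\modelB[h_1,\dots,h_s]$, so $q$ is a uniform partial isomorphism; $\Gamma$ itself says $(\modelA,c_1,\dots,c_s)\equiv_\sigma(\modelB,h_1,\dots,h_s)$, so $q\in\bisim$; and $q(a)=h_1=b=p(a)$. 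By $\omega$-saturation of $\modelB$ it therefore suffices to prove that $\Gamma$ is finitely satisfiable in $(\modelB,b)$.

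This last step is where the possibly large size of $X'$ has to be accommodated, and I expect it to be the only genuinely delicate point of the proof. Given finitely many formulas $\varphi_1,\dots,\varphi_r$ and $\psi_1,\dots,\psi_t$ of $\guf1$ in the variables $x_1,\dots,x_s$ with $\modelA\models\varphi_i(\overline{c})$ and $\modelA\not\models\psi_j(\overline{c})$, set
\[
\chi(x_1,\dots,x_s)\ :=\ \tp^\sigma_\modelA[c_1,\dots,c_s]\ \land\ \bigwedge_{i\le r}\varphi_i\ \land\ \bigwedge_{j\le t}\neg\psi_j .
\]
The claim is that $\chi$ is again a formula of $\guf1$. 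One-dimensionality is clear, and for uniformity it suffices to see that each conjunct is a boolean combination of relative $\{x_1,\dots,x_s\}$-atoms: a $k$-table is one by inspection, and a formula $\varphi(x_1,\dots,x_s)$ of $\guf1$ is, by the convention, either one whose set of free variables is exactly $\{x_1,\dots,x_s\}$ (hence, by uniformity, a boolean combination of relative $\{x_1,\dots,x_s\}$-atoms directly), or one with at most one free variable among $x_1,\dots,x_s$, or an equality $x_i=x_j$ --- and in the latter two cases one checks that every relative $\{x_\ell\}$-atom, and the equality $x_i=x_j$, is itself a relative $\{x_1,\dots,x_s\}$-atom of type~1, 2 or~3. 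Consequently
\[
\theta(x_1)\ :=\ \exists x_2\cdots\exists x_s\,\bigl(R(\overline{x}_\nu)\land\chi(x_1,\dots,x_s)\bigr)
\]
is a relative $\{x_1\}$-atom and lies in $\guf1$, with single free variable $x_1$. Since $c_2,\dots,c_s$ witness $\modelA\models\theta(c_1)$ and $p\in\bisim$ transfers single-free-variable formulas (again via the convention), we obtain $\modelB\models\theta(b)$; any witnesses $h_2,\dots,h_s$ for this satisfy $\modelB\models\chi(b,h_2,\dots,h_s)$, hence $\modelB\models\varphi_i(b,h_2,\dots,h_s)$ and $\modelB\not\models\psi_j(b,h_2,\dots,h_s)$ for all $i,j$, so $(h_2,\dots,h_s)$ realizes the chosen finite part of $\Gamma$. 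This establishes finite satisfiability, and hence (forth).
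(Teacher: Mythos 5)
Your proposal is correct and takes essentially the same route as the paper's proof: both define $\bisim$ as the set of partial maps induced by $\guf1$-equivalent tuples, reduce (forth)/(back) to realizing a type by $\omega$-saturation, and establish finite satisfiability by guarding the finite conjunction with the atomic formula witnessing $\sigma$-liveness so that the resulting existential formula is a single-free-variable formula of $\guf1$ that transfers along the given equivalence. Your explicit check that the guarded conjunction is uniform and one-dimensional is just a more detailed version of the paper's appeal to the definition of live set.
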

\begin{proof}
    Consider the following set
    \[\bisim := \{p:\overline{a} \to \overline{b} \mid (\modelA, \overline{a}) \equiv_\sigma (\modelB, \overline{b})\}.\]
    We claim that $\bisim$ is a guarded $\sigma$-bisimulation between $(\modelA, \overline{c})$ and $(\modelB,\overline{d})$. We first note that by assumption $\overline{c} \mapsto \overline{d} \in \bisim$, and hence $\bisim$ satisfies (cover). $\bisim$ also clearly consists of uniform partial $\sigma$-isomorphism between $\modelA$ and $\modelB$. What remains to be proved is that $\bisim$ also satisfies (forth) and (back). Since these two cases are analogous, we will concentrate on (forth).
    
    Let $p:\overline{a} \to \overline{b} \in \bisim$, $a \in X$ and $X' := \{c_1,\dots,c_m\} \subseteq A$ be a $\sigma$-live set so that $a \in X'$. For simplicity we will assume that $a = c_1$. Consider now the following elementary $m$-type
    \[\Sigma := \{\varphi(p(a),x_2,\dots,x_m) \in \guf1[\sigma \cup \{p(a)\}] \mid \modelA \models \varphi(a,c_2,\dots,c_m)\}.\]
    We claim that $\Sigma$ is realized in $(\modelB,p(a))$. Since $\modelB$ is $\omega$-saturated, it suffices to show that each finite subset of $\Sigma$ is realized in $(\modelB,p(a))$. Let $\psi_1(p(a),x_2,\dots,x_m),\dots,\psi_r(p(a),x_2,\dots,x_m) \in \Sigma$. Since $X'$ is $\sigma$-live, there exists an atomic formula $\alpha(x_1,\dots,x_m)$ over $\sigma$ with the property that
    \[\modelA \models \exists x_2 \dots \exists x_m (\alpha(a,x_2,\dots,x_m) \land \bigwedge_{1\leq i\leq r} \psi_i(a,x_2,\dots,x_m)).\]
    Note that Definition \ref{liveset} guarantees that this is indeed a formula of $\guf1[\sigma]$. Since $(\modelA, \overline{a}) \equiv_\sigma (\modelB, \overline{b})$, we know that 
    \[\modelB \models \exists x_2 \dots \exists x_m (\alpha(p(a),x_2,\dots,x_m) \land \bigwedge_{1\leq i\leq r} \psi_i(p(a),x_2,\dots,x_m)).\]
    Thus $\{\psi_1(p(a),x_2,\dots,x_m),\dots,\psi_r(p(a),x_2,\dots,x_m)\}$ is satisfiable in $(\modelB,p(a))$, and hence $\Sigma$ is satisfiable in $(\modelB,p(a))$, say by the tuple $(p(a),d_2,\dots,d_m)$. Now $\overline{c} \mapsto \overline{d} \in \bisim$ is the mapping we were after. 
\end{proof}

\begin{remark}
    Using the two previous lemmas one prove in a standard manner that $\guf1$ is the maximal fragment of $\mathcal{FO}$ which is invariant under uniform guarded bisimulation, see for example \cite{Brny2018SOMEMT}.
\end{remark}

\section{$\guf1$ has CIP}\label{CipProofSection}

In this section we will prove that $\guf1$ has CIP. We will start with the following lemma.

\begin{lemma}\label{standard-compactness}
    Let $\sigma$ and $\tau$ be signatures, and let $\varphi \in \guf1[\sigma]$ and $\psi \in \guf1[\tau]$. Suppose that there is no $\chi \in \guf1[\sigma \cap \tau]$ with the property that $\varphi \models \chi$ and $\chi \models \psi$. Then there is a $\sigma$-model $\modelA$ and a $\tau$-model $\modelB$ with the property that $\modelA \models \varphi$, $\modelB \not \models \psi$ and $\modelA\equiv_{\sigma \cap \tau} \modelB$.
\end{lemma}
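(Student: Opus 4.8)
This is the standard "consistency property" argument for Craig interpolation, adapted to the setting where the logic is closed under the relevant boolean operations and entailment is compact. I would proceed by contradiction: assuming no interpolant exists, build the required pair of models by a Henkin-style / compactness construction, and the key point to verify is that the set of $\guf1[\sigma\cap\tau]$-consequences of $\varphi$ together with $\{\neg\psi\}$ is consistent, and more generally that a certain "separation" property is preserved.

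\begin{proof}[Proof sketch]
Write $\rho := \sigma\cap\tau$. For a $\sigma$-theory $T$ and a $\tau$-theory $S$, say that $T$ and $S$ are \emph{separated} if there is some $\chi\in\guf1[\rho]$ with $T\models\chi$ and $\chi\models\neg\bigwedge S_0$ for some finite $S_0\subseteq S$ (equivalently, by compactness of $\fo$, $T\cup S$ together with all of $\guf1[\rho]$ cannot be made jointly consistent in the appropriate sense); otherwise they are \emph{inseparable}. The hypothesis says precisely that $\{\varphi\}$ and $\{\psi\}$ are not separated by a single $\guf1[\rho]$-sentence — note that since $\guf1$ is closed under conjunction and negation, a finite conjunction of interpolants is again an interpolant, so "no interpolant" is equivalent to "$\{\varphi\}$ and $\{\neg\psi\}$ are inseparable". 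The plan is to enlarge $\{\varphi\}$ to a maximal $\sigma$-theory $T^\ast$ and $\{\neg\psi\}$ to a maximal $\tau$-theory $S^\ast$, both consistent, while maintaining inseparability at every stage; then take $\modelA\models T^\ast$ and $\modelB\models S^\ast$.

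First I would fix enumerations of all $\fo[\sigma]$-sentences and all $\fo[\tau]$-sentences (for the present lemma it in fact suffices to work with $\guf1$-sentences, since we only need $\modelA\models\varphi$, $\modelB\not\models\psi$ and $\guf1[\rho]$-equivalence, but maximal consistency over all of $\fo$ is cleaner and the inseparability lemma below is what makes it go through). Starting from $(T_0,S_0)=(\{\varphi\},\{\neg\psi\})$, at stage $n$ I decide the $n$-th $\sigma$-sentence $\theta$: I claim that at least one of $(T_n\cup\{\theta\},S_n)$, $(T_n\cup\{\neg\theta\},S_n)$ is again inseparable. If both were separable, there would be $\chi_1,\chi_2\in\guf1[\rho]$ with $T_n\cup\{\theta\}\models\chi_1$, $T_n\cup\{\neg\theta\}\models\chi_2$, and each $\chi_i$ incompatible with a finite part of $S_n$; then $T_n\models\chi_1\lor\chi_2$, and $\chi_1\lor\chi_2\in\guf1[\rho]$ (closure under the boolean connectives), and $\chi_1\lor\chi_2$ is incompatible with the union of the two finite pieces of $S_n$ — contradicting inseparability of $(T_n,S_n)$. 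The $\tau$-side is symmetric. Interleaving the two sides and handling the two enumerations together yields an increasing chain whose unions $T^\ast\supseteq\{\varphi\}$ and $S^\ast\supseteq\{\neg\psi\}$ are maximal consistent in their respective vocabularies (each finite subset is consistent because an inseparable pair is in particular such that each side is consistent — if, say, $T_n$ were inconsistent then $T_n\models\bot$, and $\bot\in\guf1[\rho]$ separates trivially).

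Finally, I claim $T^\ast$ and $S^\ast$ agree on $\guf1[\rho]$-sentences. If $\chi\in\guf1[\rho]$ and $\chi\in T^\ast$ but $\chi\notin S^\ast$, then by maximality $\neg\chi\in S^\ast$, so $\chi$ is incompatible with a finite subset of $S^\ast$ while $T^\ast\models\chi$; hence $(T^\ast,S^\ast)$ is separated, contradicting inseparability (which is preserved along the chain and hence holds of the union, since a separating $\chi$ together with its finite $S^\ast$-witness would already appear at some finite stage). Symmetrically for $\chi\in S^\ast\setminus T^\ast$. Therefore, taking $\modelA\models T^\ast$ and $\modelB\models S^\ast$ (which exist by $\fo$-compactness), we get $\modelA\models\varphi$, $\modelB\models\neg\psi$, i.e.\ $\modelB\not\models\psi$, and $\modelA$ and $\modelB$ satisfy exactly the same $\guf1[\rho]$-sentences, which is the meaning of $\modelA\equiv_\rho\modelB$. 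The main thing to be careful about — and the only non-bookkeeping step — is the inseparability-preservation argument in the inductive step, where it is essential that $\guf1$ is closed under $\lor$ and $\neg$ and that $\rho$-interpolants can be combined; everything else is the routine Lindenbaum/compactness machinery.
\end{proof}
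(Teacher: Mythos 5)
Your argument is correct and is essentially the approach the paper takes: the paper's proof of this lemma simply defers to the standard inseparable-pairs-of-theories/compactness argument (citing Theorem 4.1 of B\'ar\'any et al.), which is exactly the Lindenbaum-style construction you spell out, with the key closure step being that $\guf1[\sigma\cap\tau]$ sentences are closed under the boolean connectives. No gaps of substance; your writeup just makes explicit what the paper leaves to the cited reference.
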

\begin{proof}
    Essentially the same argument as the one used in the proof of Theorem 4.1 in \cite{Brny2018SOMEMT} gives the result.
\end{proof}

To give a high level overview of the rest of the proof, suppose that the assumption of Lemma \ref{standard-compactness} holds for sentences $\varphi$ and $\psi$, which implies in particular that there are models $\modelA$ and $\modelB$ so that $\modelA \sim_{\sigma \cap \tau} \modelB$. Now, what we want to prove is that $\varphi \land \neg \psi$ is satisfiable. To do this, we will follow a standard approach in modal logic \cite{Brny2018SOMEMT,Hoogland2002InterpolationAD} by constructing an \emph{amalgam} $\amalgam$ which has the property that $\amalgam \sim_{\sigma} \modelA$ and $\amalgam \sim_{\tau} \modelB$. In particular, it will be a model of $\varphi \land \neg \psi$, since $\modelA\models \varphi$ and $\modelB \models \neg \psi$.

Suppose now that $\modelA \sim_{\sigma \cap \tau} \modelB$ and let $\bisim$ be a guarded $(\sigma \cap \tau)$-bisimulation which witnesses it. Given a pair $(\overline{a},\overline{b})$ we will use $(\overline{a},\overline{b})\in \bisim$ to denote the fact that there exists $p\in \bisim$ with the property that $\overline{a} = \dom(p)$ and $p(\overline{a}) = \overline{b}$. In other words the relation $a_i \mapsto b_i$ induces a uniform partial $(\sigma \cap \tau)$-isomorphism which belongs to $\bisim$.

Before describing the construction of $\amalgam$, we need to introduce some additional notation. Given two tuples $\overline{a}$ and $\overline{b}$ of the same length, we will let $(\overline{a} \otimes \overline{b})$ denote the following tuple:
\[((a_1,b_1),\dots,(a_n,b_n))\]
Given $(\overline{a} \otimes \overline{b})$, we say that it is \emph{left-good}, if for every $1\leq i < j\leq n$ we have that if $a_i = a_j$, then $b_i = b_j$. Similarly we say that $(\overline{a} \otimes \overline{b})$ is \emph{right-good}, if for every $1\leq i< j\leq n$ we have that if $b_i = b_j$, then $a_i = a_j$. Finally we say that $(\overline{a} \otimes \overline{b})$ is \emph{good} if it is left-good and right-good. Note that if $(\overline{a} \otimes \overline{b})$ is of length $n$, $k\geq n$ and $\mu:[k] \to [n]$ is a surjection, then we have that if $(\overline{a} \otimes \overline{b})$ is left-good, then so is $(\overline{a} \otimes \overline{b})_\mu$. Analogous observation of course holds for right-good and good.

As the domain of the amalgam $\amalgam$ we will take the set $U = (A \times B) \cap \bisim$, while the interpretations of relation symbols will be defined as follows. First, for every $R\in \sigma \cap \tau$ we define that
\[(\overline{a} \otimes \overline{b}) \in R^{\amalgam} \text{ iff $\overline{a} \in R^{\modelA}$ and $(\overline{a},\overline{b})\in \bisim$}\]
Then, for every $R \in (\sigma \backslash \tau)$ we define that $(\overline{a} \otimes \overline{b}) \in R^{\amalgam}$ iff $\overline{a} \in R^{\modelA}$ and one of the following conditions holds:
\begin{itemize}
    \item $(\overline{a},\overline{b}) \in \bisim$.
    \item $(\overline{a} \otimes \overline{b})$ is left-good and $\overline{a}$ is not $(\sigma \cap \tau)$-live.
\end{itemize}
Similarly, for every $R\in (\tau \backslash \sigma)$ we define that $(\overline{a} \otimes \overline{b}) \in R^{\amalgam}$ iff $\overline{b} \in R^{\modelB}$ and one of the following conditions holds:
\begin{itemize}
    \item $(\overline{a},\overline{b}) \in \bisim$.
    \item $(\overline{a} \otimes \overline{b})$ is right-good and $\overline{b}$ is not $(\sigma \cap \tau)$-live.
\end{itemize}
This concludes the construction of $\amalgam$. This construction is similar to the one given in \cite{Hoogland2002InterpolationAD} with the exception that we require tuples that are not $(\sigma \cap \tau)$-live to be either right-good or left-good.

We now define
\[\bisim_1 := \{(\overline{a} \otimes \overline{b}) \mapsto \overline{a} \mid \text{$(\overline{a}\otimes \overline{b})$ is $\sigma$-live in $\amalgam$.}\}\]
and
\[\bisim_2 := \{(\overline{a} \otimes \overline{b}) \mapsto \overline{b} \mid \text{$(\overline{a}\otimes \overline{b})$ is $\tau$-live in $\amalgam$.}\}\]
Note that if $(\overline{a} \otimes \overline{b})$ is $\sigma$-live, then by construction it is also left-good (and an analogous observation obviously holds for $\tau$-live tuples in $\amalgam$).

\begin{lemma}\label{lemmaPartialIsomorphism}
    $\bisim_1$ consists of uniform partial $\sigma$-isomorphism between $\amalgam$ and $\modelA$, and $\bisim_2$ consists of uniform partial $\tau$-isomorphism between $\amalgam$ and $\modelB$.
\end{lemma}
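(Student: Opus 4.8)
The plan is to prove the statement for $\bisim_1$; the claim for $\bisim_2$ is entirely symmetric, interchanging $\sigma$ and $\tau$, $\modelA$ and $\modelB$, the two coordinates, and left-goodness with right-goodness. Fix $p = \big((\overline{a}\otimes\overline{b}) \mapsto \overline{a}\big) \in \bisim_1$, so the underlying set $X$ of $(\overline{a}\otimes\overline{b})$ is $\sigma$-live in $\amalgam$; we may assume the components $(a_1,b_1),\dots,(a_n,b_n)$ are pairwise distinct. As observed just before the lemma, $(\overline{a}\otimes\overline{b})$ is then left-good, and this forces $a_1,\dots,a_n$ to be pairwise distinct as well: if $a_i=a_j$ with $i<j$, then left-goodness gives $b_i=b_j$, so $(a_i,b_i)=(a_j,b_j)$, contradicting distinctness. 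Hence $p$ is a bijection from $X$ onto $\{a_1,\dots,a_n\}$, and it remains only to check that $\tp_\amalgam^\sigma[(a_1,b_1),\dots,(a_n,b_n)] = \tp_\modelA^\sigma[a_1,\dots,a_n]$. Since both sides are tables of tuples of distinct elements, equalities between variables are trivially matched, so it suffices to show that for every $R \in \sigma$ the map $p$ and its inverse preserve all $R$-atoms.

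The key auxiliary fact I would isolate first is: if $(\overline{a}\otimes\overline{b})$ is $\sigma$-live in $\amalgam$ and $\{a_1,\dots,a_n\}$ is $(\sigma\cap\tau)$-live in $\modelA$, then $(\overline{a},\overline{b}) \in \bisim$. To prove it, unfold $\sigma$-liveness: either $n \le 1$, in which case $(a_1,b_1) \in U$ already means (by definition of $U$) that the singleton map $a_1\mapsto b_1$ lies in $\bisim$; or there is $S \in \sigma$ and a tuple $(\overline{a'}\otimes\overline{b'}) \in S^\amalgam$ whose components enumerate $X$, so in particular $\{a'_1,\dots\} = \{a_1,\dots,a_n\}$. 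Inspecting the definition of $S^\amalgam$ for $S \in \sigma\cap\tau$ and for $S \in \sigma\setminus\tau$, in every case we get either $(\overline{a'},\overline{b'}) \in \bisim$ directly, or $S \in \sigma\setminus\tau$ together with $\overline{a'}$ — equivalently the set $\{a_1,\dots,a_n\}$ — not being $(\sigma\cap\tau)$-live, which the hypothesis excludes. So $(\overline{a'},\overline{b'}) \in \bisim$, and since $(\overline{a}\otimes\overline{b})$ is left-good, the relation $a_i \mapsto b_i$ is the same set of pairs as the relation $a'_i \mapsto b'_i$, whence $(\overline{a},\overline{b}) \in \bisim$.

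With this in hand, fix $R \in \sigma$ of arity $k$ and an $\{x_1,\dots,x_n\}$-atom over $R$, which corresponds to a tuple $(\overline{c}\otimes\overline{d}) = \big((a_{j_1},b_{j_1}),\dots,(a_{j_k},b_{j_k})\big)$ with $\{j_1,\dots,j_k\} = [n]$. It is a surjective reindexing of $(\overline{a}\otimes\overline{b})$, hence still left-good, and $\{c_1,\dots,c_k\} = \{a_1,\dots,a_n\}$. If $R \in \sigma\cap\tau$, then by construction $(\overline{c}\otimes\overline{d}) \in R^\amalgam$ iff $\overline{c} \in R^\modelA$ and $(\overline{c},\overline{d}) \in \bisim$; the forward implication yields $\overline{c} \in R^\modelA$, while conversely $\overline{c} \in R^\modelA$ makes $\{a_1,\dots,a_n\}$ be $(\sigma\cap\tau)$-live, so the auxiliary fact gives $(\overline{a},\overline{b}) \in \bisim$, hence $(\overline{c},\overline{d}) \in \bisim$. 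If $R \in \sigma\setminus\tau$, then $(\overline{c}\otimes\overline{d}) \in R^\amalgam$ iff $\overline{c} \in R^\modelA$ and either $(\overline{c},\overline{d}) \in \bisim$ or $(\overline{c}\otimes\overline{d})$ is left-good with $\overline{c}$ not $(\sigma\cap\tau)$-live; the forward implication again yields $\overline{c} \in R^\modelA$, and conversely, assuming $\overline{c} \in R^\modelA$, we already know $(\overline{c}\otimes\overline{d})$ is left-good, and either $\{a_1,\dots,a_n\}$ is not $(\sigma\cap\tau)$-live (so the second disjunct holds) or it is, in which case the auxiliary fact delivers $(\overline{c},\overline{d}) \in \bisim$ (so the first disjunct holds). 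Either way $(\overline{c}\otimes\overline{d}) \in R^\amalgam \iff \overline{c} \in R^\modelA$. The case of unary $R$ (the $1$-types in the tables) is the same argument, but simpler: every element of $U$ is a pair $(a,b)$ for which the singleton map $a\mapsto b$ lies in $\bisim$, so the extra $\bisim$-conditions in the definition of $R^\amalgam$ are automatically satisfied. This shows the two tables agree, so $p \in \bisim_1$ is a uniform partial $\sigma$-isomorphism between $\amalgam$ and $\modelA$.

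I expect the main obstacle to be the bookkeeping concentrated in the auxiliary fact and the case distinctions: one must carefully keep apart $\sigma$-liveness of a tuple over $\amalgam$ from $(\sigma\cap\tau)$-liveness of its first-coordinate projection in $\modelA$, and repeatedly identify reindexed tuples with partial isomorphisms in $\bisim$ by exploiting precisely the left-/right-goodness conditions hard-wired into the construction of $\amalgam$. Everything else is a direct unwinding of the definitions of $R^\amalgam$.
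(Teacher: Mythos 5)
Your proof is correct and follows essentially the same route as the paper's: the decisive step in both is that $\sigma$-liveness of $(\overline{a} \otimes \overline{b})$ in $\amalgam$ together with $(\sigma\cap\tau)$-liveness of $\overline{a}$ forces $(\overline{a},\overline{b})\in\bisim$, while the non-$(\sigma\cap\tau)$-live case is absorbed by left-goodness, and $1$-types are handled via $(a_i,b_i)\in\bisim$ from the definition of $U$. You merely spell out a few points the paper leaves implicit (injectivity of the projection and the explicit $\sigma\cap\tau$ versus $\sigma\setminus\tau$ case split), which does no harm.
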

\begin{proof}
    We will only consider the case of $\bisim_1$, since the case of $\bisim_2$ is analogous. Let $(\overline{a} \otimes \overline{b}) \mapsto \overline{a} \in \bisim_1$, where the length of $(\overline{a} \otimes \overline{b})$ is $n$. We will check separately check that this mapping preserves $1$-types and $n$-ary atomic formulas. 
    
    Let $1\leq i\leq n$ and suppose that
    \[((a_i,b_i),\dots,(a_i,b_i)) \in R^{\amalgam},\]
    where $R\in \sigma$. By construction we know that $(a_i,\dots,a_i) \in R^{\amalgam}$. Suppose then that \[(a_i,\dots,a_i) \in R^{\modelA}.\]
    Since by definition of $U$ we have that $(a_i,b_i) \in \bisim$, we can conclude that $((a_i,b_i),\dots,(a_i,b_i)) \in R^{\amalgam}$. Thus $(a_i,b_i)$ and $a_i$ have the same $1$-types over $\sigma$.
    
    We will then verify that the mapping preserves $n$-ary atomic formulas. Let $R\in \sigma$ be a $k$-ary relation, where $k\geq n$, and let $\mu:[k] \to [n]$ be a surjection. We need to show that $(\overline{a} \otimes \overline{b})_\mu \in R^{\amalgam}$ iff $\overline{a}_{\mu} \in R^{\modelA}$. Again, the left to right direction follows immediately from the definition of $\amalgam$, so we will concentrate on the direction from right to left. First we note that if $\overline{a}$ is not $(\sigma \cap \tau)$-live, then we are done, since then also $\overline{a}_\mu$ is not $(\sigma \cap \tau)$-live.
    
    Thus we can assume that $\overline{a}$ is $(\sigma \cap \tau)$-live. Now, due to the definition of $\bisim_1$, we know that $(\overline{a} \otimes \overline{b})$ is $\sigma$-live in $\amalgam$. Hence, by definition of $\amalgam$, and the fact that $\overline{a}$ is $(\sigma \cap \tau)$-live, we know that $(\overline{a},\overline{b})\in \bisim$, which is the same as $(\overline{a}_{\mu},\overline{b}_{\mu}) \in \bisim$. Now we can deduce, due to the definition of $\amalgam$, that $(\overline{a}\otimes \overline{b})_\mu \in R^{\amalgam}$. This, together with the fact that $(\overline{a}\otimes \overline{b})\mapsto \overline{a}$ preserves $1$-types over $\sigma$, allows us to conclude that $\tp_{\amalgam}^{\sigma}[\overline{a} \otimes \overline{b}] = \tp_{\modelA}^{\sigma}[\overline{a}]$.
\end{proof}

\begin{lemma}\label{lemmaBackForth}
    $\bisim_1$ is a guarded $\sigma$-bisimulation between $\amalgam$ and $\modelA$, and $\bisim_2$ is a guarded $\tau$-bisimulation between $\amalgam$ and $\modelB$.
\end{lemma}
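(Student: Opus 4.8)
The plan is to verify the three conditions of Definition~\ref{guardedbisimulation} for $\bisim_1$ (and, symmetrically, for $\bisim_2$); the two verifications are mirror images of each other under $\sigma \leftrightarrow \tau$, first projection $\leftrightarrow$ second projection, and left-good $\leftrightarrow$ right-good, so I describe only $\bisim_1$, viewed as a bisimulation between $\amalgam$ and $\modelA$ with the empty tuples distinguished. By Lemma~\ref{lemmaPartialIsomorphism} every member of $\bisim_1$ is already a uniform partial $\sigma$-isomorphism, and (cover) is immediate, since the empty tuple is $\sigma$-live in $\amalgam$, so the empty map lies in $\bisim_1$. Hence the content is in (forth) and (back).

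For (forth), take $p = (\overline{a}\otimes\overline{b})\mapsto\overline{a} \in \bisim_1$, an entry $u = (a_i,b_i)$ of $(\overline{a}\otimes\overline{b})$ (so $p(u) = a_i$), and a $\sigma$-live set $X' \subseteq U$ with $u \in X'$. If $|X'| \le 1$, the singleton map $u \mapsto a_i$ works, since one-element tuples are $\sigma$-live. Otherwise $X'$ is the set of entries of some tuple $\overline{t} = (\overline{a'}\otimes\overline{b'}) \in R^{\amalgam}$ with $R \in \sigma$; since $\overline{t}$ is $\sigma$-live in $\amalgam$ it is left-good, so the first projection $q := (\overline{a'}\otimes\overline{b'})\mapsto\overline{a'}$ is a well-defined bijection onto its range, lies in $\bisim_1$, and satisfies $q(u) = a_i = p(u)$.

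For (back), take the same $p$, fix an entry $a_1$ of $\overline{a}$, put $(a_1,b_1) := p^{-1}(a_1)$ (so $a_1 \mapsto b_1 \in \bisim$, since $(a_1,b_1) \in U$), and let $Y' \subseteq A$ be $\sigma$-live with $a_1 \in Y'$. The case $|Y'| \le 1$ is handled by the singleton map $(a_1,b_1)\mapsto a_1$, so assume $Y'$ is the set of entries of some $k$-ary $\overline{s} \in R^{\modelA}$, $R \in \sigma$, with $s_j = a_1$. It suffices to produce $\overline{b'} \in B^k$ with $(\overline{s}\otimes\overline{b'}) \in R^{\amalgam}$ and $b'_j = b_1$: then $q := (\overline{s}\otimes\overline{b'})\mapsto\overline{s}$ lies in $\bisim_1$ and $q^{-1}(a_1) = (a_1,b_1) = p^{-1}(a_1)$, left-goodness taking care of well-definedness. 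If $Y'$ is $(\sigma\cap\tau)$-live in $\modelA$ (which is automatic when $R \in \sigma\cap\tau$), I apply the (forth) clause of $\bisim$ to the map $\{a_1 \mapsto b_1\}$ and the live set $Y'$, obtaining $q' \in \bisim$ with $\dom(q') = Y'$ and $q'(a_1) = b_1$; setting $b'_\ell := q'(s_\ell)$ gives $(\overline{s},\overline{b'}) \in \bisim$, which puts $(\overline{s}\otimes\overline{b'})$ into $R^{\amalgam}$ by the relevant clause of the construction, and each $(s_\ell,b'_\ell)$ lies in $U$ because $\{s_\ell \mapsto q'(s_\ell)\} \in \bisim$ (apply (forth) of $\bisim$ to a singleton again). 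If instead $R \in \sigma\setminus\tau$ and $Y'$ is not $(\sigma\cap\tau)$-live, I choose for each $\ell$ some $b'_\ell \in B$ with $s_\ell \mapsto b'_\ell \in \bisim$, arranging $b'_j = b_1$ and that equal $s_\ell$'s receive equal $b'_\ell$'s; then $(\overline{s}\otimes\overline{b'})$ is left-good, has all its entries in $U$, and $\overline{s}$ is not $(\sigma\cap\tau)$-live, so the second clause in the definition of $R^{\amalgam}$ for $R \in \sigma\setminus\tau$ applies.

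The delicate point --- and the main obstacle --- is the final case, which needs a $\bisim$-partner in $\modelB$ for \emph{every} entry of $Y'$, not merely for the distinguished $a_1$; this forces one to work with a bisimulation $\bisim$ that is total on both sides. In the eventual proof of CIP this comes for free: there $\modelA$ and $\modelB$ are $\omega$-saturated and $\bisim$ is the canonical bisimulation $\{p : (\modelA,\overline{a}) \equiv_{\sigma\cap\tau} (\modelB,\overline{b})\}$, and the $\guf1[\sigma\cap\tau]$-type of any element is finitely satisfied across the two models by guarded sentences of the form $\exists x\,(x = x \wedge \chi(x))$, hence realized by $\omega$-saturation, so this $\bisim$ projects onto all of $A$ and all of $B$ (and totality is in any case needed to ensure that $\amalgam$ has nonempty domain). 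Everything else is a routine unravelling of the definition of $\amalgam$ together with left- and right-goodness.
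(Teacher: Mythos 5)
Your proof is correct and follows essentially the same route as the paper's: (forth) by projecting a $\sigma$-live tuple of $\amalgam$ (which is necessarily left-good) onto its first coordinates, and (back) by splitting on whether $Y'$ is $(\sigma\cap\tau)$-live, invoking the (forth) clause of $\bisim$ in the live case and an arbitrary left-good completion in the non-live case. The one place where you are more careful than the paper is the requirement that every element of $A$ have a $\bisim$-partner in $B$ --- the paper dismisses this with ``such elements exist since each singleton is a live element'', whereas you correctly note it needs $\bisim$ to be total on $A$ and that this is supplied by $\omega$-saturation in the intended application.
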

\begin{proof}
    Again, we will only consider the case of $\bisim_1$, since the case of $\bisim_2$ is analogous. Due to Lemma \ref{lemmaPartialIsomorphism} we just need to verify (back) and (forth) conditions. Let $(\overline{a} \otimes \overline{b}) \mapsto \overline{a} \in \bisim_1$, where the length of $\overline{a}$ and $\overline{b}$ is $n$.
    \begin{itemize}
        \item [(forth)] Let $(a_i,b_i) \in X$ and let $X' \subseteq U$ be a $\sigma$-live set so that $(a_i,b_i) \in X'$. Since $X'$ is $\sigma$-live, we know that it is of the form $\{(c_1,d_1),\dots,(c_m,d_m)\}$, with $(\overline{c} \otimes \overline{d})$ being left-good. Now $(\overline{c} \otimes \overline{d}) \mapsto \overline{c} \in \bisim_1$ is the required mapping.
        
        \item [(back)] Let $a_i \in Y$ and let $Y' \subseteq A$ be a $\sigma$-live set so that $a_i \in Y$. For concreteness, suppose that $Y' = \{c_1,\dots,c_m\}$. Consider first the case that $Y'$ is not $(\sigma \cap \tau)$-live in $\modelA$. For every $2\leq i\leq m$ we will pick an element $d_i$ so that $(c_i,d_i) \in \bisim$. Note that such elements exists since each singleton is a live element. By construction $\{(c_1,d_1),\dots,(c_m,d_m)\}$ is $\sigma$-live, and hence $(\overline{c} \otimes \overline{d}) \mapsto \overline{c} \in \bisim_1$ is the required mapping we were after.
        
        Suppose then that $Y'$ is $(\sigma \cap \tau)$-live in $\modelA$. Since $(a_i,b_i) \in U$, we know that $(a_i,b_i) \in \bisim$. Since $\bisim$ is a guarded $(\sigma \cap \tau)$-bisimulation, there exists a set $\{d_1,\dots,d_m\} \subseteq B$ so that $(\overline{c},\overline{d}) \in \bisim$ and $(a_i,b_i) \in (\overline{c} \otimes \overline{d})$. In particular $(\overline{c} \otimes \overline{d})$ is $\sigma$-live in $\amalgam$, and hence $(\overline{c} \otimes \overline{d}) \mapsto \overline{d} \in \bisim_1$, which is the mapping we were after.
    \end{itemize}
\end{proof}

\begin{theorem}
    $\guf1$ has Craig interpolation property.
\end{theorem}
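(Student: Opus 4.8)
The plan is to establish the contrapositive by assembling the machinery developed above. Suppose $\varphi \in \guf1[\sigma]$ and $\psi \in \guf1[\tau]$ satisfy $\varphi \models \psi$, and assume toward a contradiction that no interpolant $\chi \in \guf1[\sigma \cap \tau]$ exists. Then Lemma \ref{standard-compactness} supplies a $\sigma$-model $\modelA_0 \models \varphi$ and a $\tau$-model $\modelB_0 \not\models \psi$ with $\modelA_0 \equiv_{\sigma \cap \tau} \modelB_0$. The goal is to produce a single model of $\varphi \land \neg\psi$, which contradicts $\varphi \models \psi$ and therefore forces an interpolant to exist.

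First I would replace $\modelA_0$ and $\modelB_0$ by $\omega$-saturated elementary extensions $\modelA$ and $\modelB$; these exist since the relevant vocabularies may be assumed finite and relational. As $\varphi$, $\psi$ and all of $\guf1[\sigma\cap\tau]$ are first-order, and since $(\sigma\cap\tau)$-reducts of elementary extensions remain elementary extensions, we still have $\modelA \models \varphi$, $\modelB \not\models \psi$ and $\modelA \equiv_{\sigma\cap\tau} \modelB$. Lemma \ref{upgradelemma} upgrades this to $\modelA \sim_{\sigma\cap\tau} \modelB$; fix a witnessing guarded $(\sigma\cap\tau)$-bisimulation $\bisim$ and carry out the amalgam construction of this section with respect to $\modelA$, $\modelB$ and $\bisim$, yielding $\amalgam$ with universe $U = (A\times B)\cap\bisim$ (which is non-empty, since singletons are live and $\bisim$ is non-empty). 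By Lemmas \ref{lemmaPartialIsomorphism} and \ref{lemmaBackForth}, $\bisim_1$ is a guarded $\sigma$-bisimulation between $\amalgam$ and $\modelA$, and $\bisim_2$ is a guarded $\tau$-bisimulation between $\amalgam$ and $\modelB$ (the empty set being $\sigma$-live, the empty map belongs to $\bisim_1$, so (cover) holds for the empty tuples, and similarly for $\bisim_2$). Invoking the first lemma of Section \ref{BisimulationSection} we obtain $\amalgam \equiv_\sigma \modelA$ and $\amalgam \equiv_\tau \modelB$. Since $\varphi \in \guf1[\sigma]$ this gives $\amalgam \models \varphi$, and since $\neg\psi \in \guf1[\tau]$ (the fragment is closed under negation) and $\modelB \models \neg\psi$, it gives $\amalgam \models \neg\psi$. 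Hence $\amalgam \models \varphi \land \neg\psi$, the desired contradiction.

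The substantive difficulties have all been pushed into the lemmas above rather than into this final assembly: in particular the design of the amalgam $\amalgam$ and the verification of (back) and (forth) for $\bisim_1$ and $\bisim_2$ in Lemma \ref{lemmaBackForth}, where — in contrast with the $\gf^2$ argument of \cite{Hoogland2002InterpolationAD} — live sets may have arbitrary size and one must manage the left-/right-goodness conditions on the tuples $(\overline{a}\otimes\overline{b})$ that are not $(\sigma\cap\tau)$-live. At the level of the theorem itself, the only points needing a little care are the transfer to $\omega$-saturated models (checking that the $(\sigma\cap\tau)$-reducts stay elementarily equivalent, using that reducts of elementary extensions are elementary extensions) and confirming that $\amalgam$ is a legitimate non-empty structure, so that the bisimulation lemmas of this section genuinely apply to it.
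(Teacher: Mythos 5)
Your proof is correct and follows essentially the same route as the paper's: apply Lemma \ref{standard-compactness}, pass to $\omega$-saturated elementary extensions, upgrade $\equiv_{\sigma\cap\tau}$ to $\sim_{\sigma\cap\tau}$ via Lemma \ref{upgradelemma}, build the amalgam $\amalgam$, and use Lemmas \ref{lemmaPartialIsomorphism} and \ref{lemmaBackForth} together with the invariance lemma of Section \ref{BisimulationSection} to conclude $\amalgam \models \varphi \land \neg\psi$, a contradiction. The extra remarks you include (reducts of elementary extensions are elementary, non-emptiness of $U$, the cover condition for empty tuples) are sensible bookkeeping that the paper leaves implicit.
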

\begin{proof}
    Let $\varphi \in \guf1[\sigma]$ and $\psi \in \guf1[\tau]$ be sentences so that $\varphi \models \psi$, but there is no interpolant for this entailment. By lemma \ref{standard-compactness} there exists a $\sigma$-model $\modelA$ and a $\tau$-model $\modelB$ such that $\modelA \models \varphi$, $\modelB \not\models \psi$ and $\modelA \equiv_{\sigma \cap \tau} \modelB$. Take $\omega$-saturated elementary extensions $\hat{\modelA}$ and $\hat{\modelB}$ of $\modelA$ and $\modelB$. Since $\hat{\modelA} \equiv_{\sigma \cap \tau} \hat{\modelB}$, by lemma \ref{upgradelemma} we have that $\hat{\modelA} \sim_{\sigma \cap \tau} \hat{\modelB}$. Using the construction presented in this section there exists a $(\sigma \cup \tau)$-model $\amalgam$ with the property that $\amalgam \sim_{\sigma} \hat{\modelA}$ and $\amalgam \sim_{\tau} \hat{\modelB}$. Thus $\amalgam \models \varphi \land \neg \psi$, i.e. $\varphi \land \neg \psi$ is consistent, which is a contradiction with the assumption that $\varphi \models \psi$.
\end{proof}

\section{Complexity of uniform $\gf$}\label{ComplexityUniformGfSection}

In this section we will prove that the complexity of the satisfiability problem of uniform $\gf$ is in $\textsc{NExpTime}$. Since it was proved in \cite{Kieronski2019OnedimensionalGF} that the complexity of the satisfiability problem of $\guf1$ is $\textsc{NExpTime}$-hard, this upper bound is sharp.

\subsection{Scott-normal form}

As usual, we will start by arguing that we can restrict our attention to sentences which are in a certain normal form. The normal form that we will use here has a somewhat awkward form, but the proof of Lemma \ref{complexity_scottnormalform} should clarify why we chose to use it.

\begin{definition}
    Let $\varphi$ be a sentence of $\ugf$. We say that $\varphi$ is in normal form, if it has the following shape
    \[\bigwedge_{t\in T} \exists z \lambda_t(z) \land \bigwedge_{i \in I} \forall \overline{x} (\alpha_i(\overline{x}) \to \exists \overline{y} (\beta_i(\overline{x},\overline{y}) \land \psi_i(\overline{x},\overline{y})))\]
    \[\land \bigwedge_{j\in J} \forall \overline{x} (\kappa_j(\overline{x}) \to (\theta_j(\overline{x}) \to \forall \overline{y} (\gamma_j(\overline{x},\overline{y}) \to \psi_j(\overline{x},\overline{y})))),\]
    where $T,I,J$ are non-empty (finite) sets, $\lambda_t$, $\alpha_i$, $\beta_i$, $\kappa_j$ and $\gamma_j$ are atomic formulas and $\psi_i$, $\theta_j$ and $\psi_j$ are quantifier-free formulas.
\end{definition}

\begin{remark}
    In the definition of the normal form we do not require that the tuples $\overline{y}$ are necessarily non-empty, i.e., we allow formulas of the form $\forall \overline{x} (\alpha_i(\overline{x}) \to \psi_i(\overline{x}))$ in our normal forms. However, we do require that the tuples $\overline{x}$ are non-empty, and hence we do not allow formulas of the form $\exists \overline{y} (\beta_i(\overline{y}) \land \psi_i(\overline{y}))$, where the length of $\overline{y}$ is more than one.
\end{remark}

If $\varphi$ is a sentence of $\ugf$ in normal form, then we refer to its conjuncts of the form
\[\forall \overline{x} (\alpha_i(\overline{x}) \to \exists \overline{y} (\beta_i(\overline{x},\overline{y}) \land \psi_i(\overline{x},\overline{y})))\]
as the \emph{existential requirements} and we will use $\varphi_i^\exists$ to denote them. Given a model $\modelA$, an existential requirement $\varphi_i^\exists$ and $\overline{a} \in \alpha_i^{\modelA}$ we say that a tuple $\overline{c}$ is a \emph{witness for $\varphi_i^\exists$ and $\overline{a}$} if 
\[\modelA \models \beta_i(\overline{a},\overline{c}) \land \psi_i(\overline{a},\overline{c}).\]
Conjuncts of the form
\[\forall \overline{x} (\kappa_j(\overline{x}) \to (\theta_j(\overline{x}) \to \forall \overline{y} (\gamma_j(\overline{x},\overline{y}) \to \psi_j(\overline{x},\overline{y}))))\]
will be referred to as the \emph{universal requirements} and we will use $\varphi_j^\forall$ to denote them.

Using standard renaming techniques one can establish the following.

\begin{lemma}\label{complexity_scottnormalform}
    There is a polynomial nondeterministic procedure, taking as its input a sentence $\varphi \in \ugf[\sigma]$ and producing a sentence $\varphi'\in \ugf[\sigma']$ in normal form, where $\sigma' \supset \sigma$, such that
    \begin{enumerate}
        \item if $\modelA \models \varphi$ for some model $\modelA$, then there is a run of the procedure producing a normal form $\varphi'$ such that $\modelA'\models \varphi'$ for some expansion $\modelA'$ of $\modelA$ to the vocabulary $\sigma'$,
        \item if the procedure has a run producing $\varphi'$ and $\modelA'\models \varphi'$, for some $\modelA'$, then the $\sigma$-reduct $\modelA$ of $\modelA'$ satisfies $\varphi$. 
    \end{enumerate}
\end{lemma}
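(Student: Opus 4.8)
The plan is to use the standard renaming (Scott) transformation, working bottom-up on the tree of subformulas of $\varphi$, with the extra care that every formula written down stays inside $\ugf$ and has one of the three shapes allowed by the normal form. The only genuinely non-deterministic ingredient concerns the Boolean structure at the level of \emph{sentences}. There are only linearly many subformulas of $\varphi$ that are themselves sentences; the procedure first guesses a truth value $v(s)\in\{0,1\}$ for each of them and checks in polynomial time, by evaluating the induced propositional combination, that this guess makes $\varphi$ itself evaluate to true (if not, the run outputs a fixed unsatisfiable normal-form sentence). Guessing is what lets us turn the top-level disjunctions of $\varphi$ into a conjunction, and it also disposes of \emph{nested} closed subformulas, for which no fresh relation symbol of arity $0$ is available: once $s$ has been assigned $v(s)$ we substitute that constant for every occurrence of $s$, so from then on we only deal with quantified subformulas that have at least one free variable.

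For such a subformula $\chi(\overline{x})=\exists\overline{y}(\beta(\overline{x},\overline{y})\wedge\theta(\overline{x},\overline{y}))$ I would introduce a fresh relation symbol $P_\chi$ of arity $|\overline{x}|$, add it to $\sigma'$, replace the occurrence of $\chi$ in its parent by the atom $P_\chi(\overline{x})$, and add a defining conjunct asserting, under the guard $\delta$ of the innermost existential block enclosing $\chi$, the equivalence $P_\chi(\overline{x})\leftrightarrow\chi(\overline{x})$ — the intended interpretation of $P_\chi$ being $\delta\wedge\chi$, so that $P_\chi$ also implies its own guard. Here the first place where uniformity matters turns out to be a help rather than a hindrance: the definition of a relative $X$-atom forces any quantified subformula with two or more free variables to have exactly the free variables of the guard of the block binding it, so $P_\chi(\overline{x})$ is a genuine $X$-atom inside its parent and the defining conjunct is again uniform, with no need to pad arities. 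The second place is the exact shape obtained by splitting the defining equivalence: the direction $P_\chi\to\exists\overline{y}(\beta\wedge\theta')$ (where $\theta'$ is the already-flattened body) is an existential requirement with $P_\chi$ playing the role of the guard $\alpha_i$, whereas the direction $\exists\overline{y}(\beta\wedge\theta')\to P_\chi$, after pulling the quantifier out universally and distributing, becomes a conjunction of universal requirements — and this is precisely where the awkward double implication $\kappa_j\to(\theta_j\to\cdots)$ of the normal form is needed, the slot $\theta_j$ absorbing the extra quantifier-free literal (or disjunction of renamed quantified subformulas) that a single guard atom cannot carry. After the bottom-up pass, for each sentence $s$ with $v(s)=1$ the leftover top conjunct is, after one or two further renamings that flatten a pure existential down to the allowed forms (a unary marker predicate $E$ with $\exists z\,E(z)$ together with an existential requirement $\forall z(E(z)\to\exists\overline{y}(\dots))$, and a conjunct $\exists z\,\lambda_t(z)$ in the single-variable case), a valid normal-form conjunct; and for each $s$ with $v(s)=0$ we add $\neg s$, which already has the shape of a universal requirement because $s$ was guarded. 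Each subformula contributes $O(1)$ new conjuncts of size linear in its local structure, so $\varphi'$ and $\sigma'$ are of polynomial size and the procedure runs in non-deterministic polynomial time.

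For correctness I would argue the two directions separately by induction on subformulas. Given $\modelA\models\varphi$, set $v(s)=1$ iff $\modelA\models s$ and interpret each fresh $P_\chi$ by its intended definition; the induction shows the resulting $\sigma'$-expansion $\modelA'$ satisfies every defining conjunct and every top conjunct, hence $\modelA'\models\varphi'$, giving (1). Conversely, if $\modelA'\models\varphi'$, the defining conjuncts force, again by induction, that inside the relevant guards each $P_\chi$ behaves exactly like $\chi$ and that the truth values of the closed subformulas in $\modelA'$ coincide with the guessed $v$; since the run was only produced when $v$ makes $\varphi$ evaluate to true, the $\sigma$-reduct of $\modelA'$ satisfies $\varphi$, giving (2). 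I expect the main obstacle to be purely one of bookkeeping: verifying that every formula produced during the construction really is uniform and really does match one of the three prescribed shapes. This is the one point where the argument departs from the textbook $\gf$ Scott normal form, and the two observations above — that uniformity already pins down the free-variable sets of quantified subformulas, and that the $\kappa_j\to(\theta_j\to\cdots)$ slot is flexible enough to soak up the stray literals produced by renaming — are exactly what make that bookkeeping go through.
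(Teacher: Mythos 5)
Your proposal is correct and follows essentially the same route as the paper's proof: bottom-up renaming of each innermost guarded existential subformula by a fresh relation symbol $R(\overline{x})$, with the positive direction becoming an existential requirement guarded by $R$ itself and the converse direction becoming a universal requirement whose $\theta_j$ slot absorbs $\neg R(\overline{x})$ under the enclosing guard, nondeterministic resolution of closed subformulas, and fresh unary predicates $\lambda_t$ to flatten the remaining top-level existential sentences. The two observations you single out (uniformity fixing the free-variable set of each quantified subformula, and the double-implication shape of the universal requirements existing precisely to accommodate the renaming literal) are exactly the points the paper's normal form is designed around.
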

\begin{proof}
    We will essentially follow the proof of lemma 1 in \cite{Kieronski2019OnedimensionalGF}, with some small technical modifications. Let $\varphi \in \ugf[\sigma]$ be a sentence, which w.l.o.g contains only existential quantification. Let $\psi$ be the innermost formula of $\varphi$ which starts with a block of existential quantifiers. If $\psi$ is a sentence, we will nondeterministically either replace it with $\bot$ or $\top$ and add $\psi$ or $\neg \psi$ (depending on our guess) as a conjunct to the resulting formula. Suppose then that $\psi$ is a formula of the form
    \[\exists \overline{y} (\alpha(\overline{x},\overline{y}) \land \psi(\overline{x},\overline{y})).\]
    Since $\varphi$ was a sentence, $\psi$ occurs in a scope of another formula of the form
    \[\exists \overline{z} (\alpha'(\overline{x}) \land \psi'(\overline{x})),\]
    where $\overline{z} \subseteq \overline{x}$. Let $\alpha'$ be the guard of the innermost such formula. We will now replace $\varphi$ with the following formula
    \[\varphi[\psi(\overline{x})/R(\overline{x})] \land \forall \overline{x} (R(\overline{x}) \to \exists \overline{y} (\alpha(\overline{x},\overline{y}) \land \psi(\overline{x},\overline{y})))\] \[\land \forall \overline{x} (\alpha'(\overline{x}) \to (\neg R(\overline{x}) \to \forall \overline{y} (\alpha(\overline{x},\overline{y}) \to \neg \psi(\overline{x},\overline{y})))),\]
    where $\varphi[\psi(\overline{x})/R(\overline{x})]$ is the sentence obtained from $\varphi$ by replacing the previously mentioned subformula $\psi(\overline{x})$ with the atomic formula $R(\overline{x})$ which has a fresh relation symbol $R$. It is straightforward to verify that the resulting sentence is equi-satisfiable with $\varphi$.
    
    Now one can repeat the above procedure until one is left with a sentence of the form
    \[\bigwedge_{t\in T} \exists \overline{x} (\alpha_t(\overline{x}) \land \psi_t(\overline{x})) \land \bigwedge_{i\in I} \varphi_i^\exists \land \bigwedge_{j\in J} \varphi_j^\forall,\]
    where each $\varphi_i^\exists$ is an existential requirement, while each sentence $\varphi_j^\forall$ is an universal requirement. Now one can replace each conjunct $\exists x_1 \dots \exists x_n (\alpha(x_1,\dots,x_n) \land \psi_t(x_1 \dots x_n))$ with a sentence of the form
    \[\exists x \lambda_t(x) \land \forall x_1 (\lambda_t(x_1) \to \exists x_2 \dots \exists x_n (\alpha_t(x_1,\dots,x_n) \land \psi_t(x_1,\dots,x_n))),\]
    where $\lambda_t$ is a fresh unary relation symbol. The resulting sentence is clearly equi-satisfiable with the original sentence and furthermore it is in normal form.
\end{proof}

\subsection{Satisfiability witnesses}

A standard technique in proving that the complexity of the satisfiability problem of a given fragment of $\fo$ is in \textsc{NExpTime} is to show that each satisfiable sentence of this fragment has a finite model of size at most exponential with respect to the length of the sentence \cite{GKV97,ORDEREDFRAGMENTS,Kieronski2019OnedimensionalGF,Kieronski2014ComplexityAE}. However, in the case of $\ugf$ it seems to be easier to show that we can associate to each of its sentences $\varphi$ a different type of certificate, which is still at most exponential with respect to the length of the sentence, and which can be used to construct a (potentially infinite) model for $\varphi$.

\begin{definition}\label{witness_definition}
    Let $\varphi \in \ugf[\sigma]$ be a sentence in normal form, $P$ be a set of $1$-types over $\sigma$ and $\pi \in P$. A pair $(\modelA, c)$, where $c\in A$, is called a $(P,\pi)$-witness for $\varphi$, if it satisfies the following requirements.
    \begin{enumerate}
        \item For every $a\in A$ we have that $\tp_{\modelA}[a] \in P$.
        \item For every existential requirement $\varphi_i^\exists$ and for every tuple $\overline{a}$ which contains $c$ we have that if $\modelA \models \alpha_i(\overline{a})$, then there exists a witness for $\varphi_i$ and $\overline{a}$.
        \item For every universal requirement $\varphi_j^\forall$ and for every tuple $\overline{a}$ which contains $c$ we have that if $\modelA \models \kappa_j(\overline{a}) \land \theta_j(\overline{a})$, then for every tuple $\overline{b}$ we have that \[\modelA \models \gamma_j(\overline{a},\overline{b}) \to \psi_j(\overline{a},\overline{b}).\]
    \end{enumerate}
\end{definition}

Here the intuition is that a $(P,\pi)$-witness $(\modelA,c)$ is a local certificate; it certifies that we can provide witnesses for tuples which contain the element $c$. The main idea now is that if we have a $(P,\pi)$-witness for each $\pi \in P$, then we can use them to construct a proper model for $\varphi$.

\begin{definition}
    Let $\varphi \in \ugf[\sigma]$ be a sentence in normal form. A set of $1$-types $P$ over $\sigma$ is a witness for $\varphi$, if it satisfies the following two requirements.
    \begin{enumerate}
        \item For every conjunct $\exists z \lambda_t(z)$ there exists $\pi \in P$ so that $\lambda_t(x) \in \pi$.
        \item For every $\pi \in P$ there exists a $(P,\pi)$-witness for $P$.
    \end{enumerate}
\end{definition}

The following lemmas prove that an existence of a witness for $\varphi$ is equivalent with the satisfiability of $\varphi$.

\begin{lemma}\label{lemma-witness-soundness}
    Let $\varphi\in \ugf$ be a sentence in normal form. If $\varphi$ is satisfiable, then there exists a witness for it.
\end{lemma}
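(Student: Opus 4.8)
The plan is to let a model of $\varphi$ serve, essentially verbatim, both as the source of the $1$-types in the witness and as each of the required $(P,\pi)$-witnesses. Fix a $\sigma$-model $\modelB$ with $\modelB \models \varphi$ and set
\[P := \{\tp_{\modelB}[b] \mid b \in B\}.\]
This is a (finite) set of $1$-types over $\sigma$, hence a legitimate candidate for a witness for $\varphi$, and it remains only to verify the two clauses in the definition of a witness.

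For the first clause, fix a conjunct $\exists z\,\lambda_t(z)$ of $\varphi$. From $\modelB \models \varphi$ we obtain some $b \in B$ with $\modelB \models \lambda_t(b)$, whence $\lambda_t(x) \in \tp_{\modelB}[b] \in P$. For the second clause, fix $\pi \in P$ and choose $c \in B$ with $\tp_{\modelB}[c] = \pi$, which is possible by the definition of $P$; I claim that $(\modelB, c)$ is a $(P,\pi)$-witness for $\varphi$. Clause~(1) of Definition~\ref{witness_definition} holds by the definition of $P$. For clause~(2), take an existential requirement $\varphi_i^\exists = \forall \overline{x}\,(\alpha_i(\overline{x}) \to \exists \overline{y}\,(\beta_i(\overline{x},\overline{y}) \land \psi_i(\overline{x},\overline{y})))$ and a tuple $\overline{a}$ over $B$ containing $c$ with $\modelB \models \alpha_i(\overline{a})$; since $\modelB \models \varphi_i^\exists$, there is a tuple $\overline{c}$ with $\modelB \models \beta_i(\overline{a},\overline{c}) \land \psi_i(\overline{a},\overline{c})$, i.e.\ a witness for $\varphi_i^\exists$ and $\overline{a}$. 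Clause~(3) is verified in the same way, reading off $\modelB \models \varphi_j^\forall$ for each universal requirement.

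Since this direction amounts to no more than matching the conjuncts of the normal form against the clauses of Definition~\ref{witness_definition} inside a single fixed model, I do not anticipate a genuine obstacle. The two points deserving a line of care are that there are only finitely many $1$-types over a finite relational vocabulary (so $P$ is well formed, which also matters for the complexity bound derived later) and that the quantification ``for every tuple $\overline{a}$ which contains $c$'' in Definition~\ref{witness_definition} is a weakening of the honest $\forall\overline{x}$ occurring in the normal form, so the instances of the existential and universal requirements that we invoke are indeed guaranteed by $\modelB \models \varphi$. The substantive half of the equivalence is the converse, namely building a (possibly infinite) model of $\varphi$ out of an abstract witness, which is treated separately.
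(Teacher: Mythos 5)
Your proposal is correct and is essentially identical to the paper's own proof: take a model of $\varphi$, let $P$ be the set of realized $1$-types, and use the same model with a realizing element $c$ as the $(P,\pi)$-witness for each $\pi \in P$. The paper states this more tersely, but the verification you spell out is exactly what it leaves implicit.
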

\begin{proof}
    Suppose that $\modelA \models \varphi$. As the set of $1$-types $P$ we can take the set
    \[\{\tp_{\modelA}[a] \mid a\in A\}.\]
    Clearly for every conjunct $\exists z \lambda_t(z)$ there exists a suitable $1$-type in $P$. Towards verifying the second requirement let $\pi \in P$ and let $c\in A$ be an element which realizes $\pi$. Then $(\modelA,c)$ is clearly a $(P,\pi)$-witness for $\varphi$.
\end{proof}

\begin{lemma}\label{lemma-witness-completness}
    Let $\varphi \in \ugf$ be a sentence in normal form. If there exists a witness for $\varphi$, then it is satisfiable.
\end{lemma}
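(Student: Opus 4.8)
The plan is to build a model of $\varphi$ out of the given $(P,\pi)$-witnesses by a tree-like unfolding, gluing copies of the witnesses along live sets, in the style of the standard constructions for the guarded fragment. The key point that makes this work for $\ugf$ is that a $(P,\pi)$-witness $(\modelA,c)$ already handles correctly \emph{all} requirements for tuples passing through $c$; so if we can arrange that every element of the final model sits as the distinguished point $c$ of some embedded copy of a witness, and that every live tuple of the final model lies entirely inside one such copy, then both the existential and the universal requirements will be satisfied globally, and the conjuncts $\exists z\,\lambda_t(z)$ are taken care of by requirement (1) in the definition of a witness.

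\textbf{The construction.} First I would fix, for each $\pi\in P$, a $(P,\pi)$-witness $(\modelA_\pi,c_\pi)$. Then I would build the model $\modelB$ in stages as a directed union $\modelB=\bigcup_{n} \modelB_n$. Start with $\modelB_0$ a single copy of $(\modelA_\pi,c_\pi)$ for some $\pi$ realizing one of the $\lambda_t$'s — or, more carefully, a disjoint union over $t\in T$ of one witness each, so that requirement (1) is met. At stage $n+1$, for every element $b$ of $\modelB_n$ that has not yet been "cured", with $1$-type $\pi=\tp_{\modelB_n}[b]\in P$, attach a fresh isomorphic copy of $\modelA_\pi$, identifying its distinguished point $c_\pi$ with $b$ and taking the amalgam to be free over that single point (i.e.\ no new atoms except those already inside the new copy). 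Mark $b$ as cured. A routine induction shows each $\modelB_n$ is well-defined, every $1$-type realized in $\modelB_n$ lies in $P$ (since it is inherited from one of the witnesses), and every $\sigma$-live tuple of $\modelB_n$ is contained in a single embedded copy of some $\modelA_\pi$ — this last fact is the crucial invariant, and it holds because the amalgamation is free over a single vertex, so a relational tuple of $\modelB_{n+1}$ either lives wholly in $\modelB_n$ or wholly in the newly attached copy.

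\textbf{Verifying the requirements.} Let $\modelB$ be the union. Requirement (1): the $\lambda_t$-witnesses are present at stage $0$. Existential requirements: given $\varphi_i^\exists$ and a tuple $\overline a$ with $\modelB\models\alpha_i(\overline a)$, the set $\overline a$ is $\sigma$-live, hence contained in a single embedded copy of some $\modelA_\pi$; moreover some element of that copy is its distinguished point $c_\pi$, and by the definition of a $(P,\pi)$-witness the required witness tuple $\overline c$ for $\varphi_i^\exists$ and $\overline a$ exists inside that copy — but I must check $\overline a$ actually contains the distinguished point of the copy it lives in. This is where the cure process is used: when $b$ is cured at stage $n+1$, every $\sigma$-live tuple through $b$ that ever appears is contained in the copy attached to $b$ (by the freeness invariant again), and there $b$ plays the role of $c_\pi$. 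So every live tuple of $\modelB$ lies in a copy in which one of its own elements is the distinguished point. Similarly, universal requirements are satisfied: if $\modelB\models\kappa_j(\overline a)\land\theta_j(\overline a)$ then $\overline a$ is $\sigma$-live, sits in a copy of $\modelA_\pi$ through its distinguished point $c_\pi$, and any $\overline b$ with $\modelB\models\gamma_j(\overline a,\overline b)$ makes $\overline a\overline b$ live, hence also inside that copy, where clause (3) of the witness definition gives $\modelB\models\psi_j(\overline a,\overline b)$. Finally, by the \emph{uniformity} remark, the truth of the quantifier-free $\psi_i,\psi_j,\theta_j$ on a tuple depends only on the table of its distinct elements, which is preserved by the isomorphic embedding of the copy; so no discrepancy arises from non-injective substitutions. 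Hence $\modelB\models\varphi$.

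\textbf{The main obstacle} I expect is pinning down precisely the invariant that every $\sigma$-live tuple of each $\modelB_n$ lies inside a single embedded witness copy \emph{and} contains that copy's distinguished point, and making the amalgamation step respect it — in particular being careful that identifying $c_\pi$ with an already-present point $b$ does not create spurious atoms spanning old and new elements (this is why one amalgamates freely over the single vertex), and that $b$'s $1$-type in $\modelB_n$ indeed matches that of $c_\pi$ in $\modelA_\pi$ (which needs the bookkeeping to choose $\pi=\tp_{\modelB_n}[b]$ and to know $\tp_{\modelA_\pi}[c_\pi]=\pi$, i.e.\ that a witness is required only to realize $1$-types from $P$ and that we may assume w.l.o.g.\ $\tp_{\modelA_\pi}[c_\pi]=\pi$). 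Everything else is routine.
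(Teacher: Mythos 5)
Your overall strategy --- unfold the witnesses into a tree-like model, using that each $(P,\pi)$-witness certifies all requirements for tuples through its distinguished point --- is exactly the paper's, but the gluing step has a genuine gap. You attach a \emph{full} isomorphic copy of $\modelA_\pi$ at each point $b$, freely amalgamated over the single vertex $b$. The problem is that $\modelA_\pi$ may contain $\sigma$-live sets that do not contain $c_\pi$, say a pair $\{d_1,d_2\}$ with $\modelA_\pi\models\alpha_i(d_1,d_2)$, and the definition of a $(P,\pi)$-witness guarantees nothing about such tuples: clauses (2) and (3) speak only of tuples containing $c$. Gluing the whole copy imports these unconstrained live tuples into $\modelB$, and your cure process cannot repair them: the copies later attached at $d_1$ and at $d_2$ meet the old structure only in the single vertex $d_1$ (respectively $d_2$), so any $\beta_i$-guarded witness tuple for $(d_1,d_2)$ would have to span two different copies and hence cannot exist. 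Your key claim that ``every $\sigma$-live tuple through $b$ that ever appears is contained in the copy attached to $b$'' is false for exactly this reason: $b$ already sits in live tuples of the copy that \emph{created} it, and in that copy $b$ is in general not the distinguished point. So the invariant you need for both the existential and the universal requirements fails.

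The repair, and what the paper actually does, is to copy only the \emph{star} of the distinguished point rather than the whole witness model: when attaching the witness for $a$ with $W_a = A^\pi\setminus\{c\}$, one defines the tables only of the sets $\{a\}\cup S$ with $S\subseteq W_a$ (setting $\tp_{\modelA_{n+1}}[a,c_1,\dots,c_m]:=\tp_{\modelA^\pi}[c,c_1,\dots,c_m]$) and declares every tuple whose membership has not been defined to lie outside every relation. Then every live set of size at least two in the constructed model contains the centre $a$ of the unique copy it came from, the corresponding set in $\modelA^\pi$ contains $c$, and clauses (2) and (3) of the witness definition --- together with the uniformity remark, which lets you transfer the quantifier-free $\psi_i,\theta_j,\psi_j$ along equality of tables --- apply exactly as you intended. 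The rest of your argument (the stage-$0$ handling of the conjuncts $\exists z\,\lambda_t(z)$, the disjointness bookkeeping, and the harmless w.l.o.g.\ assumption that $c_\pi$ realizes $\pi$) is fine.
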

\begin{proof}
    For simplicity we will assume that $\varphi$ contains exactly one conjunct of the form $\exists z \lambda_t(z)$. Let $P$ be a witness for $\varphi$. Thus for every $\pi \in P$ there exists a pair $(\modelA^\pi,c)$ which is a $(P,\pi)$-witness for $\varphi$. Our goal is to use these witnesses to construct a sequence of models
    \[\modelA_1 \leq \modelA_2 \leq \modelA_3 \leq \dots\]
    so that their union is a model of $\varphi$.
    
    Let $\pi \in P$ be a $1$-type so that $\pi \models \lambda_t$. As the model $\modelA_1$ we will take the model which contains a single element with $1$-type $\pi$. Suppose then that we have defined $\modelA_n$ in such a way that each $1$-type realized in $\modelA_n$ belongs to $P$. To define the model $\modelA_{n+1}$ we will proceed as follows. Given $a\in \modelA_n$, we will use $W_a$ to denote the set $A^{\pi} - \{c\}$, where $A^{\pi}$ refers to the domain of the model in the $(P,\pi)$-witness $(\modelA^\pi,c)$ of $\pi := \tp_{\modelA_n}[a]$. Without loss of generality we will assume that the sets $W_a$ are pairwise disjoint. Now we will define $\modelA_{n+1}$ as follows.
    
    \begin{itemize}
        \item The domain of the model is
        \[A_n \cup \bigcup_{a\in A_n} W_a^*\]
        \item $\modelA_{n+1} \upharpoonright A_n$ is defined to be isomorphic with $A_n$.
        \item For each $a\in A_n$ and for each $\{c_1,\dots,c_m\}\subseteq W_a$, we define that
        \[\tp_{\modelA_{n+1}}[a,c_1,\dots,c_m] := \tp_{\modelA^{\pi}}[c,c_1,\dots,c_m],\]
        where $\pi$ is the $1$-type of $a$.
        \item For every tuple $(a_1,\dots,a_m)$ and a $m$-ary relation $R$ for which we have not yet defined whether $(a_1,\dots,a_m)$ belongs to $R^{\modelA_{n+1}}$, we will simply define that it does not belong to it.
    \end{itemize}
    
    \noindent The last step guarantees that if a tuple, which contains more than one element, is live in $\modelA_{n+1}$, then it was already alive in one of the models $\modelA^\pi$. It is straightforward to verify that the union of the models $(\modelA_n)_{n<\omega}$ is indeed a model of $\varphi$.
\end{proof}

\subsection{Complexity of $\ugf$}

Although the size of a witness for $\varphi$ is clearly only exponential with respect to $|\varphi|$, we do not yet have any upper bounds on the time it takes to verify that it really is a witness for $\varphi$. The following lemma gives us such a bound.

\begin{lemma}\label{lemma-upperbound-witness}
    Let $\varphi \in \ugf$ be a sentence in normal form and let $\sigma$ denote the vocabulary of $\varphi$. Let $P$ be a set of $1$-types over $\sigma$ and $\pi \in P$. If there exists a $(P,\pi)$-witness for $\varphi$, then there exists one in which the size of the model is at most $2^{|\varphi|^{O(1)}}$.
\end{lemma}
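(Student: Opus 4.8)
The plan is to take an arbitrary $(P,\pi)$-witness $(\modelA, c)$ for $\varphi$ and shrink its domain to polynomially-in-$|\varphi|$ many ``cores'' around $c$, so that the total size becomes at most $2^{|\varphi|^{O(1)}}$ while all three conditions of Definition \ref{witness_definition} are preserved. The first observation is that conditions 2 and 3 are purely ``local'' to tuples containing $c$: they only ever constrain tuples $\overline{a}$ with $c \in \overline{a}$, and the witnesses $\overline{b}$ required for existential requirements live in $(\sigma\cap\tau)$-live sets through $c$ (more precisely, in $\alpha_i$-live sets, all of which, after restricting to those containing $c$, sit inside bounded-size live tuples since every guard has arity bounded by $|\varphi|$). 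So the natural move is: for each existential requirement $\varphi_i^\exists$ and each ``profile'' of a tuple $\overline{a}\ni c$ with $\modelA\models\alpha_i(\overline{a})$, keep exactly one such $\overline{a}$ together with one fixed witness tuple $\overline{c}$ for it. Each such $\overline{a}\cup\overline{c}$ has size at most $2\,\mathrm{ar}(\sigma) = O(|\varphi|)$, and the number of relevant profiles is bounded by the number of possible tables on $O(|\varphi|)$ elements, i.e.\ $2^{|\varphi|^{O(1)}}$; and $|I| \le |\varphi|$. Hence the set $D$ consisting of $c$ together with all kept tuples has size $2^{|\varphi|^{O(1)}}$.

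Next I would set $\modelA' := \modelA\upharpoonright D$ and argue $(\modelA', c)$ is still a $(P,\pi)$-witness. Condition 1 is immediate since $1$-types can only lose literals under taking substructures — but in fact they are preserved exactly, because $\modelA'$ is an \emph{induced} substructure, so every element of $D$ keeps its $\modelA$-$1$-type, which lies in $P$. Condition 2 holds by construction: any tuple $\overline{a}\ni c$ with $\modelA'\models\alpha_i(\overline{a})$ also satisfies $\modelA\models\alpha_i(\overline{a})$ (induced substructure), so $\overline{a}$ realizes one of the retained profiles, the chosen representative $\overline{a}_0$ for that profile has the same table over $\sigma$ as $\overline{a}$, and by the Remark on uniform quantifier-free formulas — or simply by transporting along the partial isomorphism $\overline{a}_0 \mapsto \overline{a}$, which extends to the witness — the retained witness tuple $\overline{c}_0$ for $\overline{a}_0$ yields, after the corresponding relabelling, a witness for $\varphi_i^\exists$ and $\overline{a}$ inside $D$. (One has to be slightly careful that the representative is chosen so that $c$ sits in the same coordinate; picking representatives per ``profile relative to the position of $c$'' handles this.) Condition 3 is the easy direction: universal requirements are downward-preserved to induced substructures outright — if $\modelA\models\varphi_j^\forall$ then so does every induced substructure — and in particular it holds for all tuples through $c$ in $\modelA'$.

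The main obstacle I anticipate is condition 2, specifically making the ``transport of witnesses along equal tables'' step airtight in the polyadic, non-one-dimensional setting: a witness tuple $\overline{c}$ for $\overline{a}$ may share elements with $\overline{a}$ and among themselves in complicated ways, and $\psi_i(\overline{x},\overline{y})$ is only a quantifier-free \emph{uniform} formula, so its truth depends on the table of the tuple of \emph{distinct} elements among $\overline{a},\overline{c}$ (this is exactly the content of the Remark). Thus the right notion of ``profile'' to keep must record the table of the full distinct-element enumeration of $\overline{a}$ together with which coordinate is $c$, and the chosen representative-plus-witness must be one whose combined table $\tp^\sigma_{\modelA}[\text{distinct elements of }\overline{a},\overline{c}]$ matches; with that refinement the partial isomorphism argument goes through and uniformity of $\beta_i,\psi_i$ delivers the required witness in $\modelA'$. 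Counting these refined profiles still only costs tables on $O(|\varphi|)$ elements, i.e.\ $2^{|\varphi|^{O(1)}}$, so the size bound is unaffected. I would close by noting $D$ is nonempty (it contains $c$) and collecting the bounds.
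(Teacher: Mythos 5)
There is a genuine gap in your treatment of condition 2, and it is fatal to the ``restrict to an induced substructure'' strategy. A witness for $\varphi_i^\exists$ and a tuple $\overline{a}\ni c$ is a \emph{concrete} tuple $\overline{c}$ with $\modelA\models\beta_i(\overline{a},\overline{c})\land\psi_i(\overline{a},\overline{c})$; it is existential data attached to the specific elements of $\overline{a}$, not to the table (profile) of $\overline{a}$. If $\overline{a}$ and your chosen representative $\overline{a}_0$ have the same table, the retained witness $\overline{c}_0$ satisfies $\beta_i(\overline{a}_0,\overline{c}_0)$, but there is no reason whatsoever that $\beta_i(\overline{a},\overline{c}_0)$ holds --- the ``partial isomorphism $\overline{a}_0\mapsto\overline{a}$'' has no image for the elements of $\overline{c}_0$ inside $D$, and you cannot add new relation instances in an induced substructure. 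Worse, the set $D$ you build will in general contain many guard-tuples through $c$ that are \emph{not} among your representatives: since $\modelA'=\modelA\upharpoonright D$ is induced, every $\alpha_i$-instance of $\modelA$ among retained elements survives (e.g.\ an element kept as part of a witness for requirement $i_1$ may form an $\alpha_{i_2}$-guarded tuple with $c$), and each such tuple needs its own witness, which you did not keep. Closing $D$ under adding these witnesses reintroduces the very regress you were trying to avoid.

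This is exactly why the paper's proof goes bottom-up rather than top-down: it starts from a single element $e$ of type $\pi$ and, over $m=\max_{R\in\sigma}\ar(R)$ rounds, adds a fresh witness tuple (copied from $\modelA$ via a ``similar'' tuple) for every defect, i.e.\ every unwitnessed guarded tuple through $e$. The key combinatorial point --- which your proposal has no analogue of --- is that a defect created in round $k+1$ must live on a $\sigma$-live set that \emph{strictly} contains the live set of the round-$k$ defect it was created to serve, so after $m$ rounds no new defects can appear; together with the count of live sets per round this gives the $2^{|\varphi|^{O(1)}}$ bound. If you want to repair your argument, you essentially have to replace the one-shot profile selection by this iterated witness-completion with the strict-growth termination argument. (A minor additional slip: your reference to ``$(\sigma\cap\tau)$-live sets'' imports notation from the interpolation section; only one vocabulary is in play here.)
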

\begin{proof}
    Let $(\modelA,c)$ be a $(P,\pi)$-witness for $\varphi$ and let $m = \max \{ar(R) \mid R\in \sigma\}$. Note that $m\leq |\varphi|$. Our goal is to construct a sequence
    \[\modelB_1 \leq \dots \leq \modelB_m\]
    of models so that $(\modelB_m,c)$ is a $(P,\pi)$-witness for $\varphi$ and $|B_m|\leq 2^{|\varphi|^{O(1)}}$. As the model $\modelB_1$ we will take the model which contains a single element with $1$-type $\pi$; let $e$ denote this element.
    
    Before moving forward, we will introduce one auxiliary definition. Let $\overline{a} = (a_1,\dots,a_n)$ and $\overline{b} = (b_1,\dots,b_n)$ be tuples of elements from two models $\modelA$ and $\modelB$. Let $\{c_1,\dots,c_m\}$ denote the set of distinct elements in $\overline{a}$. We say that $\overline{a}$ and $\overline{b}$ are \emph{similar}, if the mapping $p:\overline{a} \to \overline{b}$, which was the mapping induced by the relation $a_i \mapsto b_i$, is a bijection and furthermore
    \[\tp_{\modelA}[c_1,\dots,c_n] = \tp_{\modelB}[p(c_1),\dots,p(c_n)].\]
    
    Suppose now that we have defined $\modelB_k$, where $k < m$, and in such a way that for each $\sigma$-live tuple $\overline{b}$ for which $\tp_{\modelB_k}[\overline{b}]$ has been defined, there exists a similar tuple $\overline{a}$ which consists of elements of $\modelA$. Given an existential requirement $\varphi_i^\exists$ of $\varphi$ and a tuple $\overline{b} \in \alpha_i^{\modelB_k}$, which contains the element $e$, we say that $\overline{b}$ is a \emph{$i$-defect} if there exists no witness for $\varphi_i^\exists$ and $\overline{b}$ in the model $\modelB_k$. By construction, for each $i$-defect $\overline{b}$ we can find a tuple $\overline{a}$ of elements of $\modelA$ so that $\overline{b}$ and $\overline{a}$ are similar. In particular $\overline{a} \in \alpha_i^{\modelA}$, and hence there exists a witness $\overline{c}$ for $\varphi_i^\exists$ and $\overline{a}$ in $\modelA$; let $W_{\overline{b},i}$ denote the set of elements in $\overline{c}$ which were not contained in $\overline{a}$. Without loss of generality we will assume that the sets $W_{\overline{b},i}$ are pairwise disjoint. Now we will define $\modelB_{k+1}$ as follows.
    
    \begin{itemize}
        \item The domain of the model is
        \[B_k \cup \bigcup_{i\in I} \bigcup_{\overline{b} \text{ an $i$-defect}} W_{\overline{b},i}\]
        \item $\modelB_{k+1} \upharpoonright B_k$ is defined to be isomorphic with $\modelB_k$.
        \item For each $i$-defect $\overline{b}$ and a set $W_{\overline{b},i} = \{c_1,\dots,c_n\}$ we define that 
        \[\tp_{\modelB_{k+1}}[d_1,\dots,d_r,c_1,\dots,c_n] = \tp_{\modelA}[p(d_1),\dots,p(d_r),c_1,\dots,c_n],\]
        where $(d_1,\dots,d_r)$ enumerates all the elements occurring in $\overline{b}$ and $p:\overline{b} \to \overline{a}$.
        \item For every tuple $(b_1,\dots,b_n)$ and a $n$-ary relation $R$ for which we have not yet defined whether $(b_1,\dots,b_n)$ belongs to $R^{\modelB_{k+1}}$, we will simply define that it does not belong to it.
    \end{itemize}
    
    This completes the construction of the models $\modelB_1,\dots,\modelB_m$. To bound the size of $\modelB_m$, we first note that $|B_{k+1}|\leq |B_k| + |\varphi||D_k|$, where $D_k$ denotes the number of defects in $\modelB_k$. By construction, for every defect $(d_1,\dots,d_r)$ of $\modelB_k$ the set $\{d_1,\dots,d_r\}$ is a $\sigma$-live set which is not contained in $\modelB_{\ell}$, for any $\ell < k$. If $k = 1$, then the number of such $\sigma$-live sets is one, and if $k > 1$, then the number of such $\sigma$-live sets is $D_{k-1}$. Since each $\sigma$-live set is of size at most $|\varphi|$, there are at most $|\varphi||\varphi|^{|\varphi|}D_{k-1} \leq 2^{|\varphi|^{O(1)}}D_{k-1}$ defects in $\modelB_k$, i.e., $D_k \leq 2^{|\varphi|^{O(1)}}D_{k-1}$. Thus $D_k\leq 2^{|\varphi|^{O(1)}}$, for any $k < m$, and hence $|B_m|\leq 2^{|\varphi|^{O(1)}}$.
    
    Thus what remains to be proven is that  $(\modelB_m,e)$ is a $(P,\pi)$-witness for $\varphi$. Here the only non-trivial requirement that we need to verify is that $\modelB_m$ satisfies the second item in definition \ref{witness_definition}. So, let $\varphi_i^\exists$ be an existential requirement and let $\overline{b} = (b_1,\dots,b_n) \in \alpha_i^{\modelB_m}$ be a tuple which contains $e$. We can clearly assume that $n < m$. It suffices to show that $\overline{b}$ is contained in $\modelB_k$, for some $k < m$, since then by construction we know that it has a witness in $\modelB_m$.
    
    Aiming for a contradiction, suppose that $\overline{b}$ is contained in $\modelB_m$, but it is not contained in $\modelB_k$ for any $k < m$. By construction we know that, since $\overline{b}$ is $\sigma$-live, we assigned a table to some tuple $(b_1',\dots,b_r')$, where $(b_1',\dots,b_r')$ enumerates the set of distinct elements of $(b_1,\dots,b_n)$. Again, by construction we know that we assigned a table to the tuple $(b_1',\dots,b_r')$, because we wanted to provide a witness for some tuple $(d_1,\dots,d_s)$, which contains $e$ and for which $\{d_1,\dots,d_s\}$ is a \emph{strict} subset of $\{b_1',\dots,b_r'\}$.\footnote{If it were not, there would have been no need to provide a witness for it.}
    
    Now observe that $(d_1,\dots,d_s)$ is a $\sigma$-live tuple containing $e$, which is contained in $\modelB_{m-1}$ but is not contained in $\modelB_k$ for any $k < m - 1$. Indeed, if it were contained in $\modelB_k$, for some $k < m - 1$, then by construction we would have provided a witness for it in the model $\modelB_{k+1}$, i.e., $(b_1,\dots,b_n)$ would have been contained in $\modelB_{k+1}$. But now we are in a position which is the same as the one that we started in; in particular, we can repeat the above argument. After repeating the argument (at least) $(n-1)$-times we would end up with the conclusion that $e$ is contained in some $\modelB_k$, where $k > 1$, but it is not contained in $\modelB_1$, which would be an obvious contradiction.
\end{proof}

Now we can prove the main theorem of this section.

\begin{theorem}\label{ThmUniformGfComplexity}
    The complexity of the satisfiability problem of $\ugf$ is \textsc{NExpTime}-complete.
\end{theorem}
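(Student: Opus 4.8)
The plan is to combine the normal-form reduction with the witness characterization of satisfiability, and to check that every object one needs to produce or verify fits into nondeterministic exponential time. The lower bound requires no new work: $\guf1$ is a syntactic fragment of $\ugf$ and the satisfiability problem of $\guf1$ is \textsc{NExpTime}-hard \cite{Kieronski2019OnedimensionalGF}, so the same bound holds for $\ugf$. It therefore suffices to place the satisfiability problem of $\ugf$ in \textsc{NExpTime}.

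Given $\varphi \in \ugf[\sigma]$, the algorithm first invokes the polynomial nondeterministic procedure of Lemma \ref{complexity_scottnormalform} to produce a normal-form sentence $\varphi' \in \ugf[\sigma']$, noting that $|\varphi'|$ and $|\sigma'|$ are polynomial in $|\varphi|$, and that $\varphi$ is satisfiable if and only if some run produces a satisfiable $\varphi'$. By Lemmas \ref{lemma-witness-soundness} and \ref{lemma-witness-completness}, $\varphi'$ is satisfiable if and only if it has a witness $P$, a set of $1$-types over $\sigma'$. Since an $\{x\}$-atom over $\sigma'$ has the shape $R(x,\dots,x)$, there is one such atom per relation symbol, hence the number of $1$-types over $\sigma'$ is at most $2^{|\varphi'|^{O(1)}}$, and any such $P$ can be written down in exponential space.

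The algorithm then guesses $P$ and verifies that it is a witness for $\varphi'$. The first condition (for every conjunct $\exists z\,\lambda_t(z)$ of $\varphi'$ there is $\pi \in P$ with $\lambda_t \in \pi$) is checked in time polynomial in $|P| + |\varphi'|$. For the second condition, for each $\pi \in P$ the algorithm guesses a pair $(\modelA, c)$ with $|A| \le 2^{|\varphi'|^{O(1)}}$ — which suffices by Lemma \ref{lemma-upperbound-witness} — and verifies that $(\modelA, c)$ is a $(P,\pi)$-witness for $\varphi'$ according to Definition \ref{witness_definition}. This means: every element of $A$ realizes a $1$-type in $P$; for every existential requirement $\varphi_i^\exists$ and every tuple $\overline{a}$ through $c$ with $\modelA \models \alpha_i(\overline{a})$ there is a tuple $\overline{c}$ witnessing $\varphi_i^\exists$ and $\overline{a}$; and the corresponding quantifier-free check for every universal requirement. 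Because the maximal arity occurring in $\varphi'$ is at most $|\varphi'|$ while $|A| \le 2^{|\varphi'|^{O(1)}}$, the number of relevant tuples is at most $|A|^{|\varphi'|} \le 2^{|\varphi'|^{O(1)}}$, so each verification runs in exponential time; since $|P|$ is also at most exponential, the whole procedure is a nondeterministic exponential-time algorithm. Correctness is immediate from Lemmas \ref{complexity_scottnormalform}, \ref{lemma-witness-soundness}, \ref{lemma-witness-completness} and \ref{lemma-upperbound-witness}.

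The only delicate point is the counting in the last step: one must be careful that the tuples quantified over in Definition \ref{witness_definition}, and the candidate witness tuples $\overline{c}$ searched for existential requirements, number at most $2^{|\varphi'|^{O(1)}}$ — which works precisely because the maximal arity is \emph{linear} in $|\varphi'|$ whereas the domain is \emph{exponential}. All the substantive model-construction content is already contained in Lemmas \ref{lemma-witness-completness} and \ref{lemma-upperbound-witness}, so beyond this bookkeeping the argument is just assembly of the preceding lemmas.
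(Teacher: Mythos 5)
Your proposal is correct and follows essentially the same route as the paper: \textsc{NExpTime}-hardness is inherited from $\guf1$, and membership is obtained by the normal-form reduction of Lemma \ref{complexity_scottnormalform} followed by guessing a set $P$ of $1$-types together with an exponential-size $(P,\pi)$-witness for each $\pi\in P$, with correctness supplied by Lemmas \ref{lemma-witness-soundness}, \ref{lemma-witness-completness} and \ref{lemma-upperbound-witness}. Your explicit counting of the tuples to be checked during verification is a welcome bit of bookkeeping that the paper leaves implicit.
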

\begin{proof}
    The lower bound follows from the proof of Theorem 3 in \cite{Kieronski2019OnedimensionalGF}. We will give an informal description of a non-deterministic procedure running in exponential time which determines whether a given sentence $\varphi \in \ugf$ is satisfiable. It starts by converting $\varphi$ into an equi-satisfiable sentence $\varphi' \in \ugf$ in normal form, after which it guesses a set of $1$-types $P$ over the vocabulary of $\varphi'$ and for each $\pi \in P$ a $(P,\pi)$-witness $(\modelA,c)$ for $\varphi$, where the size of $\modelA$ is at most $2^{|\varphi|^{O(1)}}$. Lemmas \ref{complexity_scottnormalform}, \ref{lemma-witness-soundness}, \ref{lemma-witness-completness} and \ref{lemma-upperbound-witness} guarantee that this procedure is correct. Since $|P|\leq 2^{|\varphi|}$, the algorithm runs in exponential time with respect to $|\varphi|$.
\end{proof}

\section{Conclusions}

In this paper we have proved two results of quite distinct flavour on uniform guarded fragments. The first result was that although $\gf$ fails to have Craig interpolation, its one-dimensional uniform fragment does have it. The second result was that the complexity of the satisfiability problem of the uniform guarded fragment is \textsc{NExpTime}-complete. The results presented in this paper suggest several new research questions, but here we will mention just two of them.

The first question is whether or not the uniform $\gf$ has Craig interpolation property. While the correctness of the amalgam construction presented in section \ref{CipProofSection} rests on the assumption of one-dimensionality, we have not been able to show that uniform $\gf$ would not have Craig interpolation property. This has led the author to conjecture that the uniform $\gf$ does in fact have Craig interpolation property.

The second question is whether or not uniform $\gf$ has the exponential model property (note that if uniform $\gf$ would have an exponential model property, then one would obtain theorem \ref{ThmUniformGfComplexity} for free). As we saw in the proof of lemma \ref{lemma-upperbound-witness}, the requirement of uniformity essentially prevents uniform $\gf$ from enforcing long paths, and this seems to suggest that uniform $\gf$ can only enforce exponentially long paths (which it can enforce, since it contains standard modal logic with the global diamond). Because of this, the author conjectures that uniform $\gf$ has the exponential model property.

\section*{Acknowledgements}

The author wishes to thank Bartosz Bednarczyk for several helpful discussions on interpolation and fragments of first-order logic, and for suggesting the problem of determining the complexity of the satisfiability problem of uniform guarded fragment. The author also wishes to thank Antti Kuusisto for pointing out rather silly mistakes in the original definitions of uniformity and the uniform guarded bisimulation.

\bibliographystyle{splncs04}
\bibliography{fossacs}

\end{document}